\documentclass[11pt,a4paper,longbibliography]{amsart}
\usepackage[margin=2.5cm]{geometry}
\usepackage{cite}
\usepackage{amsmath,amsfonts,amssymb,amsthm,amscd,mathrsfs,mathdots,bbm,color,bm,enumerate,comment,graphicx,newtxmath,newtxtext,subcaption,stmaryrd,caption,soul, xcolor}
\usepackage{comment}
\usepackage{physics}
\usepackage{multirow}
\usepackage[normalem]{ulem}

\numberwithin{equation}{section}
\definecolor{equationcolor}{RGB}{222,94,100}
\definecolor{alecolor}{RGB}{198,113,190}

\usepackage[linecolor=white,backgroundcolor=orange,bordercolor=red]{todonotes}

\usepackage[colorlinks=true,linkcolor=equationcolor,citecolor=teal]{hyperref}

\hypersetup{urlcolor=teal}
\definecolor{equationcolor}{RGB}{222,94,100}

\def\R{{\mathbb R}}

\def\R{{\mathbb R}}

\def\e{\mathbbm{e}}

\usepackage{thmtools,amsthm}
\theoremstyle{plain}

\newtheorem{thm}{Theorem}[section]

\newtheorem{prop}[thm]{Proposition}

\newtheorem{defn}{Definition}[section]
\newtheorem{ex}{Example}[section]

\theoremstyle{definition}

\theoremstyle{remark}
\newtheorem{rem}{Remark}

\def\_#1{\def\next{#1}
	\ifx\next\risingsign\expandafter\rising\else^{\underline{#1}}\fi}
\def\risingsign{^}
\def\rising#1{^{\overline{#1}}}

\title[Limits of  Stochastic Semigroups and Block-Triangular Majorisation]{Limits of Stochastic Semigroups and Block-Triangular Majorisation}

\author[F.~D.~Cunden]{Fabio Deelan Cunden}
\address{Dipartimento di Matematica, Universit\'a degli Studi di Bari, I-70125 Bari, Italy, and
	INFN, Sezione di Bari, I-70126 Bari, Italy}
\email{fabio.cunden@uniba.it}

\author[J. Czartowski]{Jakub Czartowski}
\address{School of Physics, Trinity College Dublin, Dublin 2, Ireland}
\address{School of Physical and Mathematical Sciences, Nanyang Technological University, 21 Nanyang Link, 637371 Singapore, Republic of Singapore}
\email{jakub.czartowski@tcd.ie}

\author[G.~Gramegna]{Giovanni Gramegna}
\address{Dipartimento di Fisica, Universit\`a degli Studi di Bari, I-70126 Bari, Italy, and
	INFN, Sezione di Bari, I-70126 Bari, Italy}
\email{giovanni.gramegna@uniba.it}

\author[M. Ligab\`o]{Marilena Ligab\`o}
\address{Dipartimento di Matematica, Universit\'a degli Studi di Bari, I-70125 Bari, Italy}
\email{marilena.ligabo@uniba.it}
\thanks{This research is supported by Gruppo Nazionale di Fisica Matematica GNFM-INdAM, Istituto Nazionale di Fisica Nucleare INFN through the project QUANTUM, and financially supported by PNRR MUR project PE0000023-NQSTI, PNRR MUR project CN00000013 ‘Italian National Centre on HPC, Big Data
	and Quantum Computing’, PRIN 2022 project 2022TEB52W-PE1-
	`The charm of integrability: from nonlinear waves to random matrices', the project ``Patto territoriale sistema universitario pugliese'', the start-up grant associated with the Nanyang Assistant Professorship awarded to Nelly Ng at Nanyang Technological University, Singapore, and Taighde Éireann -- Research Ireland under Grant No.~IRCLA/2022/3922.}

\allowdisplaybreaks

\begin{document}
	
	\begin{abstract}  
		We investigate limits of semigroups of stochastic matrices defined by their invariant distribution. Given probability vectors $\gamma(\beta)$ depending on a parameter $\beta$, we introduce a notion of convergence as $\beta\to\infty$ for the corresponding semigroups of $\gamma(\beta)$-preserving stochastic matrices and investigate the structure of the resulting limit. In general, the limiting semigroup differs from the semigroup preserving the limiting distribution, showing that these two operations do not commute.
		We develop a general framework for such limiting semigroups and study in detail the case in which the invariant distributions are Gibbs vectors at the inverse temperature $\beta$. We show that the limiting semigroup consists of block-upper-triangular stochastic matrices subject to additional substochasticity constraints. We characterise and enumerate their extremal elements and determine the preorder on probability vectors induced by the action of the semigroup.
		The resulting notion of Block-Triangular majorisation interpolates between ordinary majorisation and upper triangular (aka unordered) majorisation. We show that it is completely characterised by a finite family of monotones and analyse the corresponding behaviour of R{\'e}nyi $\alpha$-entropies as $\beta\to\infty$.
	\end{abstract}
	
	\maketitle

	\section{Introduction}
	\label{sec:intro}
	Majorisation governs a number of fundamental state-conversion problems in quantum information theory and statistical physics. Examples include bipartite pure-state entanglement transformations~\cite{nielsen1999conditions}, coherence manipulation~\cite{baumgratz2014quantifying}, and thermodynamic state transitions under Gibbs-preserving constraints~\cite{horodecki2013fundamental}. Among these, the resource theory of athermality exhibits the richest majorisation structure, described by thermomajorisation~\cite{gour2015resource,deoliveirajunior2022}.
	
	While much recent work has focused on coherent systems, especially in asymptotic \cite{shiraishi2025recovery} and catalytic settings \cite{son2024hierarchy}, the energy-incoherent regime continues to reveal nontrivial structural aspects of finite-size thermodynamics. In particular, degeneracies in the energy spectrum introduce additional geometric and combinatorial features that are not present in the nondegenerate case.
	
	For a given nonincreasing sequence of energy levels $E_1\geq E_2\geq\cdots \geq E_n$, consider the associated Gibbs probability vector at each inverse temperature $\beta\geq0$,
	\begin{equation}
		\label{eq:Gibbs_intro}
		\gamma(\beta)
		=
		\frac1{Z(\beta)}
		\bigl(
		e^{-\beta E_1},
		\ldots,
		e^{-\beta E_n}
		\bigr),
		\qquad
		Z(\beta)=\sum_{i} e^{-\beta E_i}.
	\end{equation}
	Let $\mathcal S^{\gamma(\beta)}$ denote the set of $n\times n$ row-stochastic matrices satisfying
	$
	\gamma(\beta)P=\gamma(\beta)
	$.
	Under matrix multiplication, $\mathcal S^{\gamma(\beta)}$ is a semigroup with identity. Equivalently, it is the semigroup of Markov transition matrices having $\gamma(\beta)$ as a stationary distribution. The convex geometry of this set, its extreme points and its relation to other notions such as transportation polytopes has been extensively studied in the literature~\cite{Hartfiel1974, GregoryKirklandPullman1992, vom2019d,Cunden2025SemigroupMajorisation}.
	
	A natural question is: what becomes of the finite-temperature semigroups
	$\mathcal{S}^{\gamma(\beta)}$
	in the zero-temperature limit, $\beta\to\infty$? 
	As the temperature decreases, the Gibbs distribution $\gamma(\beta)$ becomes uniformly supported in the lowest energy space. Let $\gamma(\infty)$ be this limit. If we denote by $S^{\gamma(\infty)} $ the set of $\gamma(\infty)$-preserving row-stochastic matrices, is it true that
	\begin{equation}
		\label{eq:question}
		\lim_{\beta\to\infty}S^{\gamma(\beta)}=S^{\gamma(\infty)}?
	\end{equation}
	At low temperatures, the Gibbs state becomes increasingly concentrated on the ground-state sector. Since absolute zero is an ideal limit, it is natural to study the behaviour of the semigroups $\mathcal S^{\gamma(\beta)}$ as $\beta\to\infty$. Equation~\eqref{eq:question} asks whether taking the zero-temperature limit commutes with passing from a Gibbs state to the semigroup of stochastic maps that preserve its limit $\gamma(\infty)$.
	
	In this paper we show that the answer is negative: taking the limit of invariant distributions and taking the semigroup of stochastic matrices preserving them are, in general, non-commuting operations. Different families of invariant (not necessarily Gibbs) distribution $\gamma(\beta)$ may converge to the same limiting distribution while giving rise to different limiting semigroups. Understanding precisely which features of the asymptotic behaviour determine the limit, and which are irrelevant, is the driving question of the present work. In this paper, we first clarify the precise meaning of the limit on the left-hand side of~\eqref{eq:question}, that is, the limit of a family of semigroups of matrices indexed by the parameter $\beta\ge0$, and then compute the limit.
	
	\subsection{Outline of the paper}
	\label{sec:outline}
	The paper is organised as follows. We start in Section~\ref{sec:preparatory} by setting the notation, terminology and definitions on stochastic matrices, semigroup majorisation, and Kuratowski convergence for sequences of sets. In Section~\ref{sec:results} we present the results. In Section~\ref{sec:examples} we illustrate the zero temperature limits ($\beta\to\infty$) of Gibbs-preserving semigroups and present the corresponding finite-temperature corrections for low-dimensional explicit examples. The proofs are contained in Section~\ref{sec:proofs}.
	\section{Preliminaries}
	\label{sec:preparatory}
	
	\subsection{Stochastic matrices}
	
	We briefly recall the required notions from majorisation theory; a standard reference is~\cite{marshall1979inequalities}.
	We write $\R_+$ and $\R_{++}$ for the sets of nonnegative and strictly positive real numbers, respectively. Let $\e=(1,\dots,1)\in\R^n$ represent the flat (unnormalised) distribution over $n$ outcomes.
	
	A $n\times n$ matrix $P$ is called (row-)stochastic if $P_{ij}\geq0$ for all $i,j=1,\ldots, n$, and its rows sum to $1$:
	\begin{equation}
		P \e^T=\e^T
	\end{equation}
	Given $q\in\R_+^n$, a stochastic matrix is said to be $q$-preserving if
	\begin{equation}
		qP=q.
	\end{equation}

	We denote by $\mathcal{S}$ the semigroup of $n\times n$ stochastic matrices and by $\mathcal{S}^q$ the subsemigroup of $q$-preserving stochastic matrices. Both the identity matrix and the rank-one matrix $\e^T q$ (all rows equal to $q$) belong to $\mathcal{S}^q$, so the latter is nonempty. 
	The set $\mathcal{S}^q$ is a compact convex polytope in $\R^{n\times n}$ of dimension at most $(n-1)^2$, with equality for $q\in\R_{++}^n$.
	
	A stochastic matrix $P$ is called {$q$-subpreserving} if $qP\leq q$,
	where the inequality is understood componentwise.
	In the special case $q=\e$, the condition
	$\e P\leq \e$
	is equivalent to requiring that all column sums of $P$ are at most~$1$. In this case, $P$ is said to be {column-substochastic}.

	\subsection{Semigroup majorisation}
	
	Let
	\begin{equation}
		\Delta_{n-1}
		:=
		\left\{
		x\in\R_+^n :
		\sum_i x_i =1
		\right\}
	\end{equation}
	denote the probability simplex.
	Every stochastic matrix acts on $\Delta_{n-1}$ by right multiplication. Let $\mathcal{T}\subseteq\mathcal{S}$ be a subsemigroup of $\mathcal{S}$.
	
	\begin{defn}[Semigroup majorisation]
		For $x,y\in\Delta_{n-1}$, we say that $x$ is $\mathcal{T}$-majorised by $y$, and write
		$  x\prec^{\mathcal T} y$,
		if there exists $P\in\mathcal T$ such that $  x=yP$.
	\end{defn}
	
	If $\mathcal T$ is the set of doubly stochastic matrices, one recovers ordinary majorisation:
	\begin{equation}
		x\prec y
		\iff
		\sum_{j=1}^k x_j^\downarrow
		\leq
		\sum_{j=1}^k y_j^\downarrow,
		\qquad
		k=1,\dots,n,
	\end{equation}
	where $x^\downarrow$ denotes the nonincreasing rearrangement of $x$, i.e. $x_1^\downarrow\geq x_2^\downarrow \geq \dots \geq x_n^\downarrow$.
	
	If $\mathcal T$ is the semigroup of upper triangular stochastic matrices, one obtains upper triangular (UT) majorisation, also called unordered majorisation:
	\begin{equation}
		x\prec^{\mathrm{UT}} y
		\iff
		\sum_{j=1}^k x_j
		\leq
		\sum_{j=1}^k y_j,
		\qquad
		k=1,\dots,n.
	\end{equation}
	
	Finally, if $\mathcal T=\mathcal S^q$, we write $    x\prec^q y$.
	When $q$ is a Gibbs probability vector, this relation is known as thermomajorisation.  See~\cite{horodecki2013fundamental, vom2019d} for more details on the structure of $q$-majorisation and its appearance in the context of resource theory of athermality.
	
	\subsection{Kuratowski convergence}
	As we will be concerned with the low-temperature limits of semigroups $\mathcal{S}_n^q$, we recall the appropriate notions of lower and upper set limits following Painlev\'e and Kuratowski~\cite[Chap.~29]{KuratowskiI}.

	Let $A_1,A_2,\ldots$ be a sequence of subsets of a metric space $(X,d)$.
	\begin{defn}
		The point $p$ belongs to the \textbf{lower limit} $\operatorname{Li} A_k$ of a sequence of sets $A_1,A_2,\ldots$,  if every neighbourhood of $p$ intersects $A_k$ for all but finitely many $k$. 
	\end{defn}
	\begin{defn}
		The point $p$ belongs to the \textbf{upper limit} $\operatorname{Ls} A_k$ of a sequence of sets $A_1,A_2,\ldots$, if every neighbourhood of $p$ intersects $A_k$ for infinitely many $k$. 
	\end{defn}
	
	Equivalently, $   p\in \operatorname{Li} A_k$ if and only if there exist points $p_k\in A_k$ such that $p_k\to p$. 
	Similarly, $p\in \operatorname{Ls} A_k$ 
	if and only if there exists a subsequence $(m_k)$ and points $p_{m_k}\in A_{m_k}$ such that $p_{m_k}\to p$.
	Elements of $\operatorname{Li}A_k$ are called `limit points' of  $(A_k)_{k\geq1}$, and elements of $\operatorname{Ls}A_k$ are called `cluster points' of $(A_k)_{k\geq1}$.
	The lower and upper limits are closed sets, and one always has
	\begin{equation}
		\operatorname{Li} A_k
		\subseteq
		\operatorname{Ls} A_k.
	\end{equation}
	
	\begin{defn}
		The sequence $(A_k)$ is said to be \textbf{convergent} to $A$ in the sense of Kuratowski if $\operatorname{Li} A_k=A =\operatorname{Ls} A_k$. We then write $A=\operatorname{lim}A_k$.
	\end{defn}
	The preceding definitions extend verbatim to nets $(A_i)_{i\in I}$. In this paper, we shall only consider families of sets indexed by positive reals $\beta\geq 0$, and upper and lower limits are always understood in the sequential sense as $\beta\to\infty$. More precisely, for a family $(A_\beta)$, we write $\operatorname{Ls}A_\beta$ for the set $\operatorname{Ls}A_{\beta_k}$ whenever the latter is independent of the choice of sequence $(\beta_k)$ satisfying $\beta_k\to\infty$. The same convention applies to $\operatorname{Li}A_\beta$.
	
	Let $\mathcal F(X)$ denote the set of closed subsets of $X$, the limit being understood in the sense of Kuratowski. The space $\mathcal{F}(X)$ may fail to be topological. However, if $(X,d)$ is compact, then $\mathcal F(X)$ is metrizable via the Hausdorff distance $d_H$ (see, e.g.,~\cite{Burago2001}).
	In this case,
	\begin{equation}
		\lim A_k= A
		\iff
		d_H(A_k,A)\to0.
	\end{equation}
	Assume henceforth that $(X,d)$ is compact.
	\begin{defn}
		A map
		\begin{equation}
			f:(Y,d')\to (\mathcal F(X),d_H)
		\end{equation}
		is said to be \textbf{Painlev\'e--Kuratowski continuous} at $y\in Y$ if
		\begin{equation}
			d'(y_n, y)\to 0
			\quad\Longrightarrow\quad
			d_H\bigl(f(y_n),f(y)\bigr)\to0.
		\end{equation}
		Equivalently,  $y_n\to y$ implies     $f(y_n)\to f(y)$,
		in the sense of Kuratowski.
	\end{defn}

	\section{Results}
	\label{sec:results} 
	We consider the case of a continuous curve   
	$$
	\gamma \colon \R_+ \to \Delta_{n-1}, \qquad \beta \mapsto \gamma(\beta),
	$$
	converging as $\beta\to\infty$ to a boundary point of $\Delta_{n-1}$.   The probability vectors $\gamma(\beta)$  may not necessarily be Gibbs vectors. For definiteness, we always assume the components of $\gamma(\beta)$ to be nonincreasing. Theorem~\ref{thm:general}  presents the Kuratowski limit as $\beta\to\infty$ of the corresponding  $\gamma(\beta)$-preserving semigroups $\mathcal{S}^{\gamma(\beta)}$. In particular, the limit depends on the multiscale profile of $\gamma(\beta)$ (see Sec.~\ref{sec:multiscale} below), and this shows that the map $\Delta_{n-1}\ni q\mapsto \mathcal{S}^q$ is not continuous at the boundary of $\Delta_{n-1}$. Figure~\ref{fig:cartoon} illustrates the setting of Theorem~\ref{thm:general} in dimension $n=3$.
	
	\begin{figure}
		\centering
		\includegraphics[width=.9\linewidth]{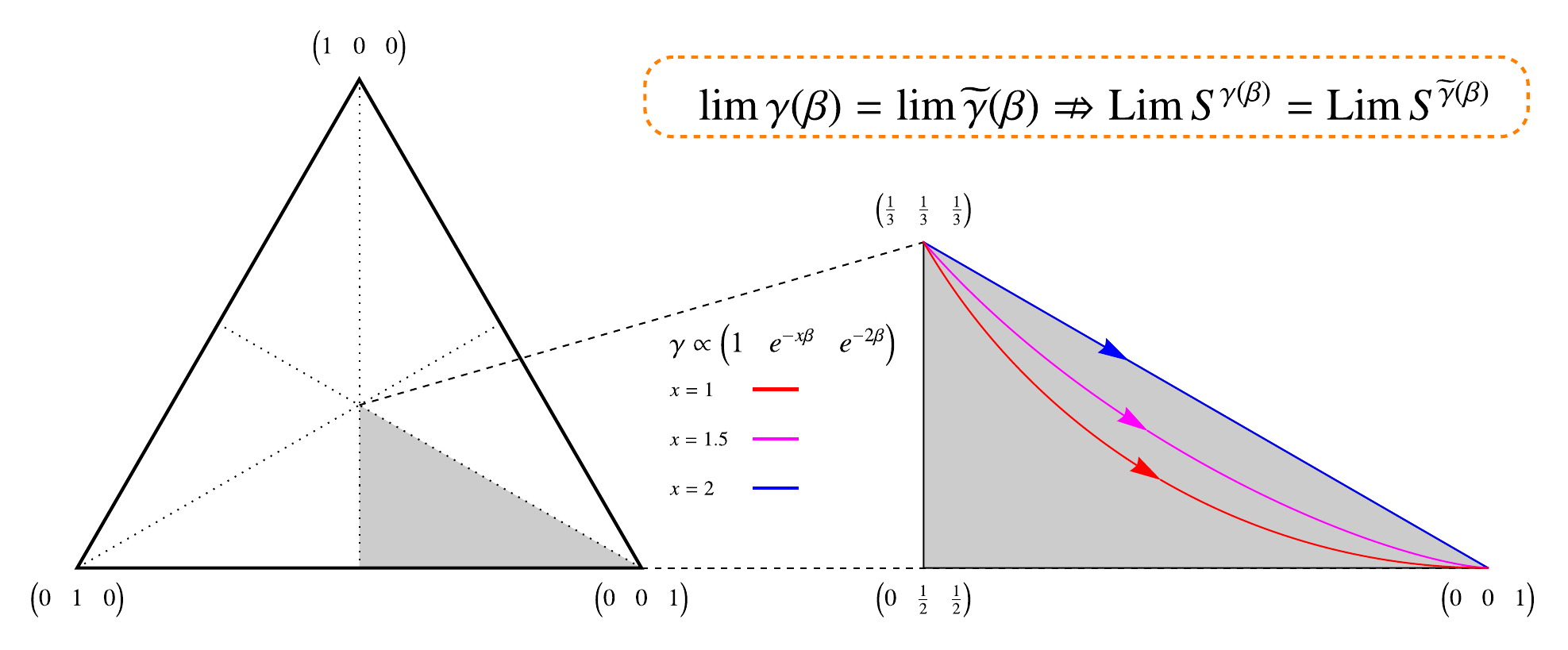}
		\caption{ Dependence of the limiting distribution-preserving semigroup on the multiscale profile. Two families of probability vectors, $\gamma(\beta)$ and $\widetilde\gamma(\beta)$, converge to the same boundary point of $\Delta_2$ as $\beta\to\infty$, while approaching it along different asymptotic trajectories. Consequently, their associated semigroups of preserving stochastic matrices have different Kuratowski limits, illustrating that the limiting distribution alone does not determine the limiting semigroup. In particular, limiting semigroup in degenerate case ($x=1$, blue line) is different from non-degenerate cases ($x = 1.5, 2$; magenta and red lines, respectively).
		}
		\label{fig:cartoon}
	\end{figure}
	Theorem~\ref{thm:main} is the specialisation of Theorem~\ref{thm:general}  to the $\beta\to\infty$ limit of $\mathcal{S}^{\gamma(\beta)}$, when $\gamma(\beta)$ is a Gibbs vector with arbitrary multiplicities of the energies. In this case the limit semigroup only depends on the degeneracy structure of $\gamma(\beta)$, and it is a set of block-upper-triangular matrices that we dub BT-stochastic matrices (`BT' for `block triangular'). We  identify and enumerate the extremal BT-stochastic matrices  (Proposition~\ref{prop:extreme_pts}). We then characterise the induced semigroup majorisation in terms of monotones (Theorem~\ref{prop:monotones}), and we study the zero-temperature limit of the R\`enyi entropic inequalities (Proposition~\ref{prop:fate_monotones}). The limiting semigroups and their extreme points are illustrated in Section~\ref{sec:examples} for $n=3$ and $n=4$, followed by a discussion of finite-temperature corrections.	
	
	\subsection{Multiscale profiles and associated limiting stochastic matrices}
	\label{sec:multiscale}
	A \textbf{composition} of $n$ is a sequence of positive integers
	$\mu=(m_1,\ldots,m_k)
	$
	such that $m_1+\cdots+m_k=n$.
	We call $\mu$ a $k$-composition when exactly $k$ parts appear. There are$ 
	\binom{n-1}{k-1}
	$
	such compositions, and altogether $2^{n-1}$ compositions of $n$. See~\cite{Stanley2011}.
	The role of compositions here is to encode the degeneracy structure of  energy vectors or, more generally, to encode the vanishing rates of the components of $\gamma(\beta)$. 
	
	Let $\gamma(\beta)$ be a $n$-probability vector parametrised by $\beta\geq0$. Assume the components of $\gamma(\beta)$ to be nonincreasing. Suppose that there is $k\geq1$ such that $\gamma$ admits a decomposition 
	$$
	\gamma(\beta)
	=
	\bigl(
	\gamma^1(\beta),\ldots,\gamma^k(\beta)
	\bigr)\in\Delta_{n-1},
	$$
	in blocks $\gamma^a(\beta)\in\mathbb{R}_{+}^{m_a}$, $a=1,\ldots,k$ that are separated in scale, in the sense that
	$$
	\|\gamma^1(\beta)\|_1
	\ll
	\|\gamma^2(\beta)\|_1
	\ll\cdots\ll
	\|\gamma^k(\beta)\|_1,
	\qquad
	\beta\to\infty,
	$$
	that is,
	$$
	\frac{\|\gamma^a(\beta)\|_1}
	{\|\gamma^{a+1}(\beta)\|_1}
	\to 0,
	\qquad
	a=1,\ldots,k-1.
	$$
	In this case we say that $\gamma(\beta)$ is of \textbf{type} $\mu=(m_1,\ldots, m_k)$ as $\beta\to\infty$.
	
	We write $\mathcal{S}^{\gamma(\beta)}$
	for the semigroup of row-stochastic $\gamma(\beta)$- preserving matrices. 
	To describe the $\beta\to\infty$ limit of $\mathcal{S}^{\gamma(\beta)}$, it is convenient to organise matrices according to the type of $\gamma(\beta)$.
	We thus partition an $n\times n$ matrix into blocks,
	$$
	P=\bigl(P^{ab}\bigr)_{a,b=1}^k,
	$$
	where the block $P^{ab}$ has size $m_a\times m_b$. Indices inside the blocks will rarely matter and will be suppressed whenever possible.

	Suppose furthermore that, for each $a=1,\ldots,k$,
	$$
	\frac{\gamma^a(\beta)}
	{\|\gamma^a(\beta)\|_1}
	\longrightarrow
	q^a,
	$$
	for some probability vector $q^a\in\Delta_{m_a-1}$.
	We call the tuple $(q^1,\ldots,q^k)$ 
	the \textbf{multiscale profile} of the family $\left(\gamma(\beta)\right)_{\beta\geq0}$. As $\beta\to\infty$, 
	\begin{equation}
		\label{eq:limit_gamma_gen}
		\gamma(\beta)
		\to
		\gamma(\infty)
		=
		\big(
		\underbrace{0,\ldots,0}_{n-m_k},
		q^{k}_1,\ldots,q^{k}_{m_k}
		\big).
	\end{equation}
	The semigroup $\mathcal{S}^{\gamma(\infty)}$ preserving the limiting  vector~\eqref{eq:limit_gamma_gen}
	consists of row-stochastic matrices with blocks
	\begin{equation}
		P^{k1}=0, P^{k2}=0,\ldots,  P^{k,{k-1}}=0  
	\end{equation}
	while the lower-right $m_k\times m_k$ block $P^{kk}$ is $q^k$-preserving.
	\begin{ex}
		Given a nonincreasing sequence $
		E=(E_1,\ldots,E_n)\in\mathbb R^n$, with $E_1\geq E_2\geq\cdots \geq E_n$,
		we say that $E$ is of type $\mu=(m_1,\ldots,m_k)$
		if
		\begin{equation}
			\begin{aligned}
				E_1&=\cdots=E_{m_1},\\
				E_{m_1+1}&=\cdots=E_{m_1+m_2},\\
				&\ \vdots\\
				E_{m_1+\cdots+m_{k-1}+1}
				&=\cdots=
				E_{m_1+\cdots+m_k},
			\end{aligned}
		\end{equation}
		with strict inequalities between consecutive blocks, $E_{m_1+\cdots+m_i} > E_{m_1+\cdots+m_i + 1}$. Thus $E$ has exactly $k$ distinct energy levels, with multiplicities $m_1,\ldots,m_k$. 
		Define the \textbf{Gibbs vector} at each inverse temperature $\beta\geq0$,
		\begin{equation}
			\label{eq:Gibbs}
			\gamma(\beta)
			=
			\frac1{Z(\beta)}
			\bigl(
			e^{-\beta E_1},
			\ldots,
			e^{-\beta E_n}
			\bigr),
			\qquad
			Z(\beta)=\sum_{i=1}^n e^{-\beta E_i}.
		\end{equation}
		Let $\mathcal{S}^{\gamma(\beta)}$
		be the semigroup of $\gamma(\beta)$- preserving matrices. 
		To describe the $\beta\to\infty$ limit of $\mathcal{S}^{\gamma(\beta)}$. 
		As before, it is convenient to organise matrices according to the degeneracy pattern $
		\mu=(m_1,\ldots,m_k)$.
		Note that the Gibbs vector decomposes as
		$$
		\gamma(\beta)
		=
		\bigl(
		\gamma^1(\beta),
		\ldots,
		\gamma^k(\beta)
		\bigr),
		$$
		with each block $\gamma^a(\beta)$ of length $m_a$ and equal entries.
		
		Then, for each $a=1,\ldots, k-1$,
		\begin{equation}
			\frac{\|\gamma^a(\beta)\|_1}
			{\|\gamma^{a+1}(\beta)\|_1}=\frac{m_a}{m_{a+1}} \exp\left(-\beta \left(E_{m_1+\cdots+{m_a}}-E_{m_1+\cdots+{m_a}+1}\right)\right)\to 0, 
		\end{equation}
		so the blocks are exponentially separated in scale. Moreover, within each block the distribution is exactly uniform. Thus  $\gamma(\beta)$ has multiscale profile $(q^1,\ldots, q^k)$ with $q^a=\left(\frac{1}{m_a},\ldots,\frac{1}{m_a}\right)\in\Delta_{m_a-1}$ the flat vector for each $a=1,\ldots,k$.

		As the temperature decreases ($\beta\to\infty$), the Gibbs distribution concentrates on the ground-state sector. More precisely, the limiting Gibbs state is uniform in the last block of dimension $m_k$, and zero everywhere else:
		\begin{equation}
			\label{eq:limit_Gibbs}
			\gamma(\beta)
			\to
			\gamma(\infty)
			=
			\frac1{m_k}\big(
			\underbrace{0,\ldots,0}_{n-m_k},
			\underbrace{1,\ldots,1}_{m_k}
			\big).
		\end{equation}
		The semigroup $\mathcal{S}^{\gamma(\infty)}$ preserving the limiting Gibbs state~\eqref{eq:limit_Gibbs}
		consists of row-stochastic matrices with blocks
		\begin{equation}
			P^{k1}=0, P^{k2}=0,\ldots,  P^{k,{k-1}}=0  
		\end{equation}
		while the lower-right $m_k\times m_k$ block $P^{kk}$ is bistochastic.
	\end{ex}
	\begin{ex} We now give a non-Gibbs example. 
		Let $$\gamma(\beta)=c_{\beta}\left(e^{-\beta}, 2e^{-\beta},\frac{1}{\beta},\frac{1}{\beta},1,2,3\right),$$ where $c_{\beta}=(3e^{-\beta}+\frac{2}{\beta}+6)^{-1}$. This is a vector of type $\mu=(2,2,3)$ as $\beta\to\infty$. Its multiscale profile is 
		$$ 
		q^1=\left(\frac{1}{3},\frac{2}{3}\right),\qquad
		q^2=\left(\frac{1}{2},\frac{1}{2}\right),\qquad
		q^3=\left(\frac{1}{6},\frac{1}{3},\frac{1}{2}\right).
		$$
	\end{ex}
	
	Back to our question: \emph{does $\mathcal{S}^{\gamma(\beta)}$ converge to $\mathcal{S}^{\gamma(\infty)}$, or does a more rigid structure emerge in the limit?}
	
	It turns out that transitions against the blocks of $\gamma(\beta)$ that are separated in scale are suppressed as $\beta\to\infty$. What survives in the limit is therefore not the full  semigroup $\mathcal{S}^{\gamma(\infty)}$, but a subsemigroup of it with a block-upper-triangular structure.
	
	\begin{thm}
		\label{thm:general}
		Let $\left(\gamma(\beta)\right)_{\beta\geq0}$ be a family of probability vectors of type $\mu=(m_1,\ldots,m_k)$ as $\beta\to\infty$, with multiscale profile $(q^1,\ldots,q^k)$. Then the semigroups $\mathcal{S}^{\gamma(\beta)}$
		converge, in the Kuratowski sense, to the set $\mathcal T^{q^1,\ldots,q^k}$ of row-stochastic block-upper-triangular matrices 
		\begin{equation}
			P=
			\begin{pmatrix}
				P^{11} & P^{12} & \cdots & \cdots & P^{1k}\\
				0 & P^{22} & \cdots & & \vdots\\
				\vdots & \vdots & \ddots & & \vdots\\
				\vdots & & & P^{k-1,k-1} & P^{k-1,k}\\
				0 & \cdots & \cdots & 0 & P^{kk}
			\end{pmatrix},
		\end{equation}
		where each block $P^{ab}$ has size $m_a\times m_b$, and the diagonal blocks $P^{aa}$ are $q^a$-subpreserving:
		\begin{equation}
			q^aP^{aa}\leq q^a
		\end{equation}
		for all $a=1,\ldots,k$. (In particular, the last block $P^{kk}$ is $q^k$-preserving.)
		
		As a consequence,  for $n\geq3$, the set-valued map $q\mapsto \mathcal{S}^q$ is not Painlev\'e-Kuratowski continuous at the boundary points of $\Delta_{n-1}$. 
	\end{thm}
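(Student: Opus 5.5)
The plan is to prove the two inclusions $\operatorname{Ls}\mathcal S^{\gamma(\beta)}\subseteq\mathcal T^{q^1,\ldots,q^k}$ and $\mathcal T^{q^1,\ldots,q^k}\subseteq\operatorname{Li}\mathcal S^{\gamma(\beta)}$ for an arbitrary sequence $\beta_j\to\infty$. Together with $\operatorname{Li}\subseteq\operatorname{Ls}$ they give Kuratowski convergence.

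\emph{Upper limit.} Let $P_j\in\mathcal S^{\gamma(\beta_j)}$ with $P_j\to P$. Stochasticity passes to the limit. Write $s_a(\beta)=\|\gamma^a(\beta)\|_1$ and $\hat\gamma^a=\gamma^a/s_a\to q^a$. The block form of $\gamma P=\gamma$, restricted to the column block $b$, reads
\[
\sum_{a}s_a\,\hat\gamma^a P^{ab}=s_b\,\hat\gamma^b.
\]
Fix $b$ and divide by $s_b$. Every term is nonnegative, so each term with $a>b$ is at most $\|\hat\gamma^b\|_1=1$ entrywise. Since $s_a/s_b\to\infty$, this forces $\hat\gamma^aP_j^{ab}\to0$ for $a>b$.

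This only controls the $\hat\gamma^a$-weighted rows. To conclude $P^{ab}=0$ we would need $q^a$ strictly positive, which we do not have. So we first treat the generic case $q^a\in\R_{++}^{m_a}$. For the general case we refine the argument: a zero entry of $q^a$ means that coordinate sits at a strictly smaller scale. We then re-read the type more finely, or else observe that the type hypothesis requires genuine separation only between blocks. Whether zero entries of $q^a$ are allowed is \textbf{the main obstacle}. If they are, $P^{ab}$ may be nonzero on the rows where $q^a$ vanishes, and the statement would need positivity of the profile. I would first check whether the intended definition implicitly assumes $q^a>0$, as in the Gibbs case, and otherwise add that hypothesis.

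Granting positivity, $P^{ab}=0$ for $a>b$. Dropping the terms with $a<b$ (which are nonnegative) leaves $\hat\gamma^bP_j^{bb}\le\hat\gamma^b$, which passes to the limit as $q^bP^{bb}\le q^b$. For $b=k$, the terms with $a<k$ carry $s_a/s_k\to0$ times bounded quantities, so equality holds in the limit.

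\emph{Lower limit.} Given $P\in\mathcal T$, we must construct $P_\beta\in\mathcal S^{\gamma(\beta)}$ converging to $P$. First reduce to $P$ in the relative interior, with $q^aP^{aa}\le q^a$ strict where possible, by a convex combination with a suitable interior element such as $\e^Tq$-type blocks; this is possible because $\operatorname{Li}$ is closed. Then proceed block column by block column, from $b=1$ upward.

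For each $b$ we need
\[
s_b\hat\gamma^b-\sum_{a\le b}s_a\hat\gamma^aP_\beta^{ab}=\sum_{a>b}s_a\hat\gamma^aP_\beta^{ab}.
\]
The left side is $s_b$ times $(\hat\gamma^b-\hat\gamma^bP^{bb}_\beta)+o(s_b)$, which is nonnegative. We realise it by small lower-triangular entries $P_\beta^{ab}$ of order $s_b/s_a\to0$, for example all taken from block $a=b+1$, each row given weight proportional to the deficit. The rows receiving this mass are rebalanced by subtracting it from their diagonal-block entries, which preserves row sums; the change is $o(1)$, so convergence is unaffected. The finite-$\beta$ mismatch $\hat\gamma^a\neq q^a$ is absorbed by a perturbation $P^{aa}_\beta\to P^{aa}$ built from the strict-subpreservation slack. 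Equality at the last block is fixed by the same scheme in reverse, using the diagonal-block perturbation.

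\emph{Discontinuity.} For $n\ge3$, take a boundary point and two curves of different types, or different profiles, converging to it; the earlier example and Figure~\ref{fig:cartoon} supply these. The two curves give different limits $\mathcal T$, and each limit differs from $\mathcal S^{\gamma(\infty)}$ because upper-triangular constraints appear. Hence $q\mapsto\mathcal S^q$ is not Painlev\'e--Kuratowski continuous at boundary points of $\Delta_{n-1}$.
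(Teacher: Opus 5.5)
Your plan follows the paper's two-inclusion scheme. The two places where you depart from it are both places where the paper's own proof is incomplete, and in both cases your fix is the right one.

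\emph{Upper limit.} You are right that the block balance only gives $\hat\gamma^aP^{ab}\to0$. The paper's step ``$\|\gamma^b\|_1\ll\|\gamma^a\|_1$, therefore $(P^{\beta_\ell})^{ab}\to0$'' silently assumes $q^a>0$. This is not a technicality, and ``re-reading the type'' cannot rescue the statement as written, because the predicted limit depends on the chosen decomposition. For example, $\gamma(\beta)\propto(e^{-\beta},e^{-\beta},1)$ is of type $(1,2)$ with profile $((1),(0,1))$. For that profile, $\mathcal T^{q^1,q^2}$ is exactly the set of upper triangular stochastic matrices. Yet the transposition of the first two coordinates lies in every $\mathcal S^{\gamma(\beta)}$, hence in $\operatorname{Li}\mathcal S^{\gamma(\beta)}$, and it is not upper triangular. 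So add $q^a\in\R_{++}^{m_a}$ as a hypothesis; it holds in the Gibbs case. Under it, your argument for this inclusion is complete.

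\emph{Lower limit.} The paper replaces only the last row, by $p^{(\beta)}=p+\gamma_n(\beta)^{-1}\bigl(\gamma(\beta)-\gamma(\beta)P\bigr)$, and claims nonnegativity follows from subpreservation. That claim is false when the slack vanishes. Take type $(1,1,1)$ and $P$ with rows $(0,1,0),(0,1,0),(0,0,1)$: then $p^{(\beta)}_2=-\gamma_1/\gamma_3<0$ for every $\beta$.

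Your reduction, via closedness of $\operatorname{Li}$, to $P$ with $q^aP^{aa}<q^a$ strictly for $a<k$ is exactly the missing step. Make the same convex combination also give strictly positive block-$k$ entries in the last row. One way is to mix with $Q$ having $Q^{aa}=\tfrac12\e^Tq^a$ and $Q^{ak}=\tfrac12\e^Tq^k$ for $a<k$, and $Q^{kk}=\e^Tq^k$.

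After this reduction the paper's one-row formula works:
\begin{itemize}
\item its entries in a block $b<k$ equal $\frac{s_b}{\gamma_n}\bigl(q^b-q^bP^{bb}+o(1)\bigr)_j$, which are positive and tend to $0$;
\item its entries in block $k$ equal $p_j+o(1)$, since $\gamma_n/s_k\to q^k_{m_k}>0$.
\end{itemize}
I would use this in place of your column-by-column filling. That scheme needs positive diagonal-block entries for the rebalancing, and it leaves the last-block correction (``the same scheme in reverse'') unspecified. The one-row correction is precisely that missing final step.

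\emph{Discontinuity.} This paragraph overreaches; note that the paper's proof does not address this part at all. At a boundary point with exactly one vanishing coordinate, every approaching family has type $(1,n-1)$, and then $\mathcal T^{(1),q'}=\mathcal S^{\gamma(\infty)}$. The map is in fact continuous there, so ``each limit differs from $\mathcal S^{\gamma(\infty)}$'' is wrong. Your argument works only at boundary points with at least two vanishing coordinates. For example, at the vertex $u_n$ (the $n$-th standard basis vector), a nondegenerate Gibbs curve gives the upper triangular matrices. Meanwhile $\mathcal S^{u_n}$ contains the matrix with rows $u_1,u_1,u_3,\ldots,u_n$, which is not upper triangular.
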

	
	\subsection{Zero-temperature limit and BT-stochastic matrices}
	We now specialise Theorem~\ref{thm:general} to Gibbs-preserving semigroups. In this case,  transitions against the energy ordering become exponentially suppressed as $\beta\to\infty$. 
	
	\begin{thm}
		\label{thm:main}
		Let $\mu=(m_1,\ldots,m_k)$ be a composition of $n$, and let
		$$
		E_1\ge\cdots\ge E_n
		$$
		be a vector of type $\mu$. Then the semigroups $\mathcal{S}^{\gamma(\beta)}$
		converge, in the Kuratowski sense, to the set $\mathcal T_\mu$ of row-stochastic block-upper-triangular matrices $P=(P^{ab})_{a,b=1}^k$, 
		where each block $P^{ab}$ has size $m_a\times m_b$, $P^{ab}=0$ if $b<a$, and 
		every diagonal block
		\begin{equation}
			P^{11},P^{22},\ldots,P^{kk}
		\end{equation}
		is column-substochastic.
	\end{thm}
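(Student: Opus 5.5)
Theorem~\ref{thm:main} is the specialisation of Theorem~\ref{thm:general}, so I would reduce it to that theorem rather than redo the limit argument. The reduction has two steps. First, as shown in the Gibbs example above, a Gibbs vector built from an energy vector $E$ of type $\mu$ is a family of type $\mu$ as $\beta\to\infty$. The ratios $\|\gamma^a(\beta)\|_1/\|\gamma^{a+1}(\beta)\|_1$ decay like $\frac{m_a}{m_{a+1}}e^{-\beta\Delta_a}$, where $\Delta_a>0$ is the gap between consecutive distinct energy levels. The components are nonincreasing because $E$ is nonincreasing. Within each block, $\gamma^a(\beta)/\|\gamma^a(\beta)\|_1$ is exactly the flat vector $u_a=\frac1{m_a}(1,\ldots,1)\in\Delta_{m_a-1}$ for every $\beta$, so the multiscale profile is $(u_1,\ldots,u_k)$. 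Second, Theorem~\ref{thm:general} then gives
\[
\lim_{\beta\to\infty}\mathcal S^{\gamma(\beta)}=\mathcal T^{u_1,\ldots,u_k}
\]
in the Kuratowski sense. It remains to check that $\mathcal T^{u_1,\ldots,u_k}=\mathcal T_\mu$.

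Both sets consist of row-stochastic block-upper-triangular matrices with blocks of size $m_a\times m_b$, so the only point to compare is the condition on the diagonal blocks. The condition $u_aP^{aa}\le u_a$ is, after multiplying by $m_a>0$, exactly $\e P^{aa}\le \e$ with $\e\in\R^{m_a}$. This says every column sum of $P^{aa}$ is at most $1$, i.e.\ $P^{aa}$ is column-substochastic, which is the defining condition in $\mathcal T_\mu$.

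For the last block I would add a consistency remark. Row-stochasticity and $P^{kb}=0$ for $b<k$ force $P^{kk}$ to be row-stochastic. Its total sum is then $m_k$, so column sums bounded by $1$ must all equal $1$, and $P^{kk}$ is bistochastic. This agrees with the parenthetical remark in Theorem~\ref{thm:general} and with the description of $\mathcal S^{\gamma(\infty)}$ in the Gibbs example.

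None of these steps is a real obstacle; the substance sits in Theorem~\ref{thm:general}. The only point needing care is that the hypotheses there are stated for a continuous family parametrised by $\beta$, with limits understood along arbitrary sequences $\beta_k\to\infty$. The Gibbs vector is continuous in $\beta$, and the scale separation and profile convergence hold along every such sequence, so the hypotheses are met.
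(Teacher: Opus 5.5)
Your reduction is exactly how the paper obtains Theorem~\ref{thm:main}: it applies Theorem~\ref{thm:general} to the Gibbs family, whose type $\mu$ and flat profile $q^a=\frac1{m_a}(1,\ldots,1)$ are computed in the Gibbs example, and then $q^aP^{aa}\le q^a$ is just column-substochasticity of $P^{aa}$. One harmless slip: because $E$ is nonincreasing, the Gibbs components $e^{-\beta E_i}/Z(\beta)$ are nondecreasing in $i$, not nonincreasing (the paper's ``nonincreasing'' convention is itself incompatible with $\|\gamma^1(\beta)\|_1\ll\cdots\ll\|\gamma^k(\beta)\|_1$ for $k\ge2$), but monotonicity plays no role in your reduction.
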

	We call the block-upper-triangular matrices in $\mathcal{T}_{\mu}$, BT-stochastic matrices.
	\begin{rem}
		The last diagonal block $P^{kk}$ is both row-stochastic and column-substochastic. Since its entries are nonnegative, it is automatically column-stochastic, hence bistochastic.
	\end{rem}
	
	\begin{rem}
		In the fully degenerate case $\mu=(n)$,
		$$
		E_1=\cdots=E_n,
		$$
		for all $\beta\geq0$,  $S^{\gamma(\beta)}=\mathcal T_\mu$ is the set of bistochastic matrices. 
		
		Whenever $\mu\neq (n)$, the limit semigroup $\mathcal{T}_{\mu}$ is a proper subsemigroup of $\mathcal{S}^{\gamma(\infty)}$. For example, in the nondegenerate case
		$$
		E_1>\cdots>E_n,
		\qquad
		\mu=(1,\ldots,1),
		$$
		the limiting semigroup $\mathcal T_\mu$ is the set of upper triangular row-stochastic matrices discovered by Narasimhachar and  Gour ~\cite{gour2015resource}. 
	\end{rem}
	
	The difference between $\mathcal{T}_\mu$ and $\mathcal{S}^{\gamma(\infty)}$ can be understood in terms of allowed probability flows between energy levels: $\mathcal{S}^{\gamma(\infty)}$ inhibits only transitions from the ``lowest'' or ``ground'' energy level to the ``excited'' ones, but still allows transitions from lower to upper energy levels among the excited ones, while this is not possible within $\mathcal{T}_\mu$, which is reminiscent of the level hierarchy in the low-temperature Gibbs weights.
	Informally, if a row-stochastic matrix $P$ is in the $\beta\to\infty$ limit of the family $\{\mathcal{S}^{\gamma(\beta)}\}_{\beta>0}$, then there exist matrices  $P^{\beta}\in \mathcal{S}^{\gamma(\beta)}$, such that $P^{\beta}\to P$ as $\beta\to\infty$. This implies that 
	\begin{equation}
		\gamma(\beta)(P-I)=\gamma(\beta)(P-P^{\beta})+\underbrace{\gamma(\beta)(P^{\beta}-I)}_{=0}= \gamma(\beta) (P-P^{\beta})\to 0.
	\end{equation}
	But by continuity, $\gamma(\beta)(P-I)\to \gamma(\infty)(P-I)$. We conclude that $\gamma(\infty)P=\gamma(\infty)$, namely $P\in \mathcal{S}^{\gamma(\infty)}$. 
	
	However, not all matrices $P\in \mathcal{S}^{\gamma(\infty)}$ can be approximated by $\gamma(\beta)$-preserving stochastic matrices for large $\beta$. For instance, consider 
	\begin{equation}
		P=
		\mqty(1 & 0 &0\\
		1 &0 &0 \\
		0 & 0 &1)\in \mathcal{S}^{\gamma(\infty)}.
	\end{equation}
	Assume that there exists a family $\{P^{\beta}\}_{\beta}$ with $P^{\beta}\in \mathcal{S}^{\gamma(\beta)}$ such that $\|P^{\beta}-P\|\to0$, as $\beta\to\infty$. For all $\beta>0$, since $\gamma(\beta)P^{\beta}=\gamma(\beta)$, we have 
	\begin{equation}
		\gamma_1(\beta)P^{\beta}_{1,1}
		+\gamma_2(\beta)P^{\beta}_{2,1}
		+ \gamma_3(\beta)P^{\beta}_{3,1}=\gamma_1(\beta).
	\end{equation}
	Hence, $\gamma_2(\beta)P^{\beta}_{2,1}\leq \gamma_1(\beta)$, and we get to the contradiction
	$$
	P^{\beta}_{2,1}\leq \frac{\gamma_1(\beta)}{\gamma_2(\beta)}\to 0\neq 1=P_{2,1}.
	$$
	
	Indeed, Theorem~\ref{thm:main} claims that the zero-temperature limit of $\mathcal{S}^{\gamma(\beta)}$ is the set of row-stochastic upper triangular matrices $\mathcal{T}_{(1,1,1)}$, a proper subsemigroup of $ \mathcal{S}^{\gamma(\infty)}$.
	
	\subsection{Extreme points of BT-stochastic matrices}
	
	Recall that the set $S^q$ of $q$-preserving stochastic matrices is compact and convex. By a classical theorem of Minkowski~\cite{brondsted2012introduction}, every compact convex set is the convex hull of its set of extreme points. For general $q$, the characterisation of the extreme points of $S^q$ is non-trivial. It is known~\cite[Remark 4.5]{joe1990majorization} that, for $q\in\Delta_{n-1}$,
	the number of extreme points of $S^q$ is at least $n!$ and at most $\binom{n^2}{2n-1}$. 
	If $q=\e$, then  $S^q$ is the celebrated Birkhoff polytope whose extreme points are the $n!$ permutation matrices~\cite{birkhoff1946tres}.

	It is easy to see that $\mathcal{T}_{\mu}$ is also a convex set whose extreme points can be characterised and enumerated. 
	
	\begin{prop}
		\label{prop:extreme_pts}
		Let $\mu=(m_1,\ldots, m_k)$ be a composition of $n$.  Denote by $\operatorname{ext}(\mathcal{T}_{\mu})$ the set of extreme points of $\mathcal{T}_{\mu}$. Then, 
		\begin{enumerate}
			\item $\operatorname{ext}(\mathcal{T}_{\mu})$ is the set of  block-upper-triangular $0$--$1$ matrices satisfying:
			\begin{enumerate}
				\item each row contains exactly one $1$;
				\item for every $a=1,\ldots,k$, each column of the diagonal block $P_{aa}$ contains at most one $1$.
			\end{enumerate}
			\item The number of extreme points $\operatorname{ext}(\mathcal{T}_{\mu})$ is
			\begin{equation}
				\label{eq:Laguerre_repr}
				\#\operatorname{ext}(\mathcal{T}_{\mu})=\prod_{l=1}^k\widehat{L}_{m_l}\left(m_1+\cdots+m_{l-1}\right),
			\end{equation}
			where $\widehat{L}_n(x):=n!L_n(-x)$, with
			\begin{equation}
				L_n(x):=[t^n]\frac{1}{1-x}e^{-\frac{tx}{1-x}}=\sum_{k=0}^{n}\binom{n}{k}\frac{(-1)^k}{k!}x^k
			\end{equation} 
			the $n$-th Laguerre polynomial, 
			and we use the convention that the empty sum is $0$.
		\end{enumerate} 
		
	\end{prop}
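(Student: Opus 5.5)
The structure of $\mathcal{T}_\mu$ makes it decompose row by row. Let $r_i\in\{1,\dots,k\}$ denote the block containing row $i$. The constraints on $P\in\mathcal T_\mu$ are of three kinds: nonnegativity, row sums equal to $1$, $P^{ab}=0$ for $b<a$, and column sums at most $1$ within each diagonal block $P^{aa}$. The rows belonging to block $a$ only interact through the column-substochasticity of $P^{aa}$. I would exploit this decomposition throughout.

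\textbf{Part (1).} First I will check that every matrix described in (a)--(b) is extreme. It is a $0$--$1$ matrix in $\mathcal T_\mu$. If $P=\tfrac12(A+B)$ with $A,B\in\mathcal T_\mu$, then every zero entry of $P$ is zero in both $A$ and $B$ by nonnegativity. Each row has a single nonzero entry, so the row-sum condition forces $A=B=P$.

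For the converse, let $P\in\mathcal T_\mu$ have some fractional entry $0<P_{ij}<1$. I will produce a perturbation $P\pm\varepsilon D$ that stays in $\mathcal T_\mu$. The plan is a standard augmenting-path argument on a bipartite graph $G$:
\begin{itemize}
\item the left vertices are rows, and the right vertices are columns;
\item the edges are the fractional entries lying in allowed positions;
\item a column is \emph{tight} if it lies in a diagonal block $P^{aa}$ with column sum exactly $1$ there, and \emph{slack} otherwise (this includes every column reached through an off-diagonal block $b>a$, since those impose no column constraint).
\end{itemize}
Every row carrying a fractional entry carries at least two. Starting from a fractional edge, I alternately leave a row through a second fractional edge and leave a tight column through a second fractional edge within the same diagonal block. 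A tight column with one fractional entry would have sum $<1$ because the rest of that column segment is $0/1$, so a second fractional entry always exists. The walk either closes a cycle or ends at slack columns at both ends. In either case, alternating $\pm\varepsilon$ along it preserves the row sums, preserves the tight column sums, and keeps the slack column sums $\le1$ for small $\varepsilon$. So $P$ is not extreme.

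Hence extreme points are $0$--$1$ matrices with one $1$ per row, and column-substochasticity of the diagonal blocks is exactly condition (b). The main obstacle is bookkeeping: a column $j$ in block $b$ is subject to the constraint only from the rows of block $b$. I will therefore treat each pair (column, diagonal block) as the constrained vertex, and the remaining entries of that column as free.

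\textbf{Part (2).} I will count block by block. Fix $l$, set $N=m_1+\dots+m_{l-1}$ and $m=m_l$. A row in block $l$ places its $1$ in one of the $n-N-m$ columns of blocks $b>l$, or in one of the $m$ columns of block $l$. Rows of block $l$ with index ordering must avoid earlier columns, so $N$ here should count the later columns; I will reconcile orientation by noting that $\mu\mapsto$ reversed composition gives the same count, or simply define $N_l=m_{l+1}+\dots+m_k$ and check it matches the stated product after reindexing.

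Within block $l$, choose a set $S$ of $j$ rows mapped injectively into the $m$ diagonal columns. This gives $\binom{m}{j}\,m!/(m-j)!$ choices. The remaining $m-j$ rows each choose freely among $N$ outside columns, giving $N^{m-j}$ choices. Summing over $j$ and substituting $k=m-j$ gives
\[
\sum_{k=0}^{m}\binom{m}{k}\frac{m!}{k!}N^{k}=m!\,L_m(-N)=\widehat L_m(N).
\]
Rows in different blocks choose independently, so the total count is the product over $l$.

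The remaining subtlety is matching the index $N$ to the statement, $m_1+\dots+m_{l-1}$. I expect it to come out as the number of columns strictly to the right of block $l$. The product should then be rewritten by reversing the composition, or the statement's convention on which columns are off-diagonal-accessible should be checked. I would verify this on $\mu=(1,1)$: there are $3$ upper-triangular extreme points, and $\widehat L_1(0)\widehat L_1(1)=1\cdot2$ with the stated index, versus $2\cdot1$ with the reversed index. Both orders give $2$. The true count is $2$ (rows $(1,0)$ or $(0,1)$ in row 1, and $(0,1)$ in row 2), which is consistent with either order.
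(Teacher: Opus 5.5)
Your Part (1) follows the paper's plan: the $0$--$1$ points of $\mathcal T_\mu$ are extreme, and a point with a fractional entry is not. The paper only asserts the second half ``by contradiction''. Your alternating-path perturbation actually proves it, and it is correct: you take the constrained vertices to be the (column, diagonal block) pairs and let off-diagonal entries act as free endpoints. Your count in Part (2) is the paper's count. In block $l$, some rows are placed injectively into the diagonal block and the rest go freely to the $m_{l+1}+\cdots+m_k$ columns to the right, giving $\widehat L_{m_l}(m_{l+1}+\cdots+m_k)$. The paper's displayed formula likewise carries $(m_{a+1}+\cdots+m_k)^{i'}$.

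The problem is your last paragraph, where you try to reconcile this with the printed argument $m_1+\cdots+m_{l-1}$.

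\begin{itemize}
\item Reversing $\mu$ does \emph{not} preserve the count. $\#\operatorname{ext}(\mathcal T_{(2,1)})=9-2=7$: rows $1,2$ each pick a column in $\{1,2,3\}$, but not the same column in $\{1,2\}$. By contrast, $\#\operatorname{ext}(\mathcal T_{(1,2)})=3\cdot 2=6$. Both values appear in the paper's table.
\item Your test case $\mu=(1,1)$ is its own reversal, so it cannot tell the two orientations apart. (You also write $3$ there before correctly concluding $2$.)
\item In fact $\prod_l\widehat L_{m_l}(m_1+\cdots+m_{l-1})$ is exactly $\#\operatorname{ext}(\mathcal T_{\mu^{\mathrm{rev}}})$, with $\mu^{\mathrm{rev}}=(m_k,\ldots,m_1)$. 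For $\mu=(2,1)$ it gives $\widehat L_2(0)\,\widehat L_1(2)=2\cdot 3=6\neq 7$. For $(3,1)$ it gives $24$ instead of $34$, and for $(2,1,1)$ it gives $24$ instead of $28$.
\end{itemize}

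So no reindexing closes this step. The formula as printed is off by a reversal of the composition. The correct conclusion of your argument is $\#\operatorname{ext}(\mathcal T_\mu)=\prod_{l}\widehat L_{m_l}(m_{l+1}+\cdots+m_k)$. This is also what the paper's own proof derives (its ``follows immediately'' passes over the same mismatch) and what its $n=3,4$ tables report. You should state this correction explicitly, rather than leave the index question open or appeal to a reversal symmetry that does not hold.
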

	
	\subsection{BT-majorisation}
	The semigroup $\mathcal{T}_{\mu}$ defines a semigroup majorisation on probability vectors~\cite{Cunden2025SemigroupMajorisation}. For $x,y\in \R^n$ we say that $x$ is $\mathcal{T}_{\mu}$-majorised by $y$, and write $x\prec^{\mathcal{T}_{\mu}}y$ if $$
	x=yP,\, \text{ for some $P\in\mathcal{T}_{\mu}$.}
	$$
	The relation $\prec^{\mathcal{T}_{\mu}}$ is a preorder on $\R^n$. We call this relation BT-majorisation.
	
	We index vectors $x\in\R^n$ according to the block structure  $\mu=(m_1,\ldots,m_k)$, writing $x=(x^1,\ldots,x^k)$ where $x^a=(x^a_1,\ldots,x^a_{m_a})$ for all $a=1,\ldots,k$.
	
	For each $a=1,\ldots,k$,  let $(x^a)^{\downarrow}$ be the nonincreasing rearrangement of the entries $x^a$. The $\mu$-nonincreasing rearrangement of $x$ is then defined by 
	\begin{equation}\label{eq:NotationMuOrdering}
		x^{\mu,\downarrow}=\left((x^1)^{\downarrow},(x^2)^{\downarrow},\ldots, (x^k)^{\downarrow}\right).
	\end{equation}
	In words, $x^{\mu,\downarrow}$ is the vector obtained by sorting the entries of $x$ 
	within the blocks, while preserving the block order.
	
	We also define the \textbf{lexicographic} $<
	^{\mathrm{lex}}$ and \textbf{reverse lexicographic} $<
	^{\mathrm{revlex}}$ orders. For $a=(a_1,\ldots,a_k)$ and $b=(b_1,\ldots,b_k)$ real $n$-tuples, we write  $a<
	^{\mathrm{lex}}b$ if, at the first differing coordinate $i$, $a_i<b_i$; we write instead $a<
	^{\mathrm{revlex}}b$ if, at the last differing coordinate $i$, $a_i<b_i$.

	\begin{thm}
		\label{prop:monotones}
		Let $\mu$ be a composition of $n$.
		For $x,y\in\mathbb \R_+^n$, the following are equivalent:
		\begin{enumerate}
			\item $x\prec^{\mathcal T_\mu}y$;
			\item for every $\ell=1,\ldots,n$,
			$$
			\sum_{i=1}^{\ell}x_i^{\mu,\downarrow}
			\le
			\sum_{i=1}^{\ell}y_i^{\mu,\downarrow},
			$$
			with equality for $\ell=n$.
		\end{enumerate}
	\end{thm}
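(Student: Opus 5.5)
By Proposition~\ref{prop:extreme_pts}, $\mathcal T_\mu$ contains every block-diagonal matrix whose diagonal blocks are permutation matrices, and it is closed under multiplication. Hence $x\prec^{\mathcal T_\mu}y$ if and only if $x^{\mu,\downarrow}\prec^{\mathcal T_\mu}y^{\mu,\downarrow}$. Condition (2) is also invariant under within-block permutations. So I would first reduce to the case where $x$ and $y$ are already $\mu$-sorted. Write $s_a=m_1+\cdots+m_a$ and $D_\ell=\sum_{i\le\ell}(y_i-x_i)$, so that (2) reads $D_\ell\ge0$ for all $\ell$ and $D_n=0$.

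\emph{(1)$\Rightarrow$(2).} Let $x=yP$ with $P\in\mathcal T_\mu$, and fix $\ell=s_{a-1}+r$ with $1\le r\le m_a$. Let $S$ be the set of columns of block $a$ carrying the $r$ largest entries of $x^a$. Then
\[
\sum_{i\le\ell}x_i^{\mu,\downarrow}=\sum_{j\le s_{a-1}}x_j+\sum_{j\in S}x_j=\sum_i y_i\,c_i,
\qquad c_i=\sum_{j\le s_{a-1}}P_{ij}+\sum_{j\in S}P_{ij}.
\]
Block-upper-triangularity gives $c_i=0$ for rows in blocks $b>a$, and row-stochasticity gives $c_i\le1$ for all $i$. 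For rows $i$ in block $a$ the first sum vanishes, and column-substochasticity of $P^{aa}$ gives $\sum_{i\in\text{block }a}c_i\le r$. Since $y^a$ is sorted and $c_i\in[0,1]$ with total at most $r$, we get $\sum_{i\in a}y_ic_i\le\sum_{i\le r}y^a_i$. This gives the inequality, and equality at $\ell=n$ follows from row-stochasticity.

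\emph{(2)$\Rightarrow$(1).} I would build $P$ block by block as a transport of mass that only moves forward. Mass from a row in an earlier block may go to any later column. Mass from a row of block $a$ may stay in block $a$ through a doubly substochastic $Q^a$, or be forwarded to later blocks. Let $d_a=D_{s_{a-1}}\ge0$ be the mass entering block $a$ from earlier blocks. The target is a vector $u^a\ge0$ with $|u^a|\le d_a$ and a doubly substochastic $Q^a$ such that $y^aQ^a+u^a=x^a$. The surplus $d_a-|u^a|$ together with the row deficits $y_i(1-\sum_jQ^a_{ij})$ is then passed on, and it automatically totals $D_{s_a}$. To choose $u^a$, I would lower the top entries of $x^a$ by water-filling, removing total mass $\min(d_a,|x^a|)$, and obtain $z^a=x^a-u^a$. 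Then $z^a$ stays sorted, and I expect $\sum_{i\le r}z^a_i\le\sum_{i\le r}y^a_i$ for every $r$, using $D_{s_{a-1}+r}\ge0$. The classical result that weak submajorisation $z\prec_w y$ of nonnegative vectors is realised by a doubly substochastic matrix then supplies $Q^a$.

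Finally, I would assemble the rows of $P$. A row $i$ in block $a$ has $Q^a$ on its diagonal block, and its deficit is spread over the columns of later blocks in proportion to the incoming demands $u^b$ (for $b>a$) they must absorb. This proportional rule is fed from a common pool of forwarded mass, so row sums equal $1$ and $yP=x$. I expect the main obstacle to be the water-filling step: verifying $z^a\prec_w y^a$ exactly when the level drops below some $x^a_{r+1}$, and checking that the pooled forward mass matches all later demands when $D_n=0$. If the bookkeeping there becomes awkward, my fallback is a separating-hyperplane argument. It would compute $\max_{P\in\operatorname{ext}(\mathcal T_\mu)}\langle yP,w\rangle$ via Proposition~\ref{prop:extreme_pts} and show that this support function lies in the cone generated by the partial-sum functionals of (2).
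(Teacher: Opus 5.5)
What fails is the assembly rule at the end of (2)$\Rightarrow$(1); everything before it is sound. The reduction to $\mu$-sorted vectors is correct. Your proof of (1)$\Rightarrow$(2) via the coefficients $c_i\in[0,1]$ is also correct, and shorter than the paper's argument through $y^aP^{aa}\prec_w y^a$ plus a mass-balance identity. The water-filling step you flagged holds as well. If $d_a<\|x^a\|_1$ and the top $t$ entries of $x^a$ are cut to level $\lambda$, then for $r\ge t$ you get $\sum_{i\le r}z^a_i=\sum_{i\le r}x^a_i-d_a\le\sum_{i\le r}y^a_i$ from $D_{s_{a-1}+r}\ge0$. For $r<t$, the case $r=t$ gives $t\lambda\le\sum_{i\le t}y^a_i$, hence $\sum_{i\le r}z^a_i=r\lambda\le\frac rt\sum_{i\le t}y^a_i\le\sum_{i\le r}y^a_i$ because $y^a$ is nonincreasing. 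If $d_a\ge\|x^a\|_1$, simply $z^a=0$. This is a valid alternative to the paper's device of applying Hardy--Littlewood--P\'olya to $x^a\prec_w y^a+d_a(1,0,\dots,0)$.

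Your rule has each row of block $a$ spread its deficit over later columns in proportion to the demands $u^b$, $b>a$. That ignores two constraints: $u^b$ is also fed by rows of blocks strictly between $a$ and $b$, and block $a+1$ can only be fed by blocks $\le a$. Take $\mu=(1,1,1)$, $y=(\frac12,\frac12,0)$, $x=(0,\frac12,\frac12)$. Your recipe gives $Q^1=Q^2=0$ and $u^2=u^3=\frac12$, hence $P_{12}=P_{13}=\frac12$ and $P_{23}=1$. Then $yP=(0,\frac14,\frac34)\neq x$, whereas $P_{12}=P_{23}=1$ works.

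The repair is to make the pool explicit and process blocks in order. By your own identity, the forwarded mass available before block $b$ is exactly $d_b$. Let block $b$ take, from every earlier row, the fraction $\|u^b\|_1/d_b$ (with $0/0:=0$) of that row's \emph{residual} forwarded mass. Split this over the columns of block $b$ proportionally to $u^b_j$; a greedy first-in-first-out rule works equally well. This is feasible since $\|u^b\|_1\le d_b$, and column $j$ of block $b$ then receives exactly $u^b_j$. Moreover $D_n=0$ forces $\|u^k\|_1=d_k$ and $\|z^k\|_1=\|y^k\|_1$. So the pool empties at the last block, and every row with $y_i>0$ gets row sum $1$.

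Two small points remain. Rows with $y_i=0$ may put their leftover fraction in any column of a later block. In the last block you must take $Q^k$ doubly stochastic (Hardy--Littlewood--P\'olya for $z^k\prec y^k$), since its rows have nowhere to forward. The paper leaves this same extraction of the off-diagonal blocks implicit, so the repair is needed on either route.
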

	
	BT-majorisation interpolates between standard majorisation (when $\mu=(n)$) and upper triangular (UT) or unordered majorisation (when $\mu=(1,\ldots,1)$).
	\subsection{The fate of monotones}
	It is known~\cite{vom2019d} that
	\begin{equation}
		\label{eq:thermo}
		x\prec^{\gamma(\beta)} y\quad\iff\quad \sum_j \gamma(\beta)_j\, g\!\left(\frac{x_j}{\gamma(\beta)_j}\right)
		\le
		\sum_j \gamma(\beta)_j\, g\!\left(\frac{y_j}{\gamma(\beta)_j}\right)
	\end{equation}
	for every continuous convex function $g$ defined on a domain containing the points
	$
	x_j/\gamma(\beta)_j
	$
	and
	$
	y_j/\gamma(\beta)_j
	$. Assume that $\gamma(\beta)$ is a Gibbs vector parametrised by the inverse temperature $\beta$.
	
	Choosing
	$$
	g(t)=\frac{1}{\beta}t\log t,
	\qquad t\ge0,
	$$
	in~\eqref{eq:thermo},
	one finds, 
	\begin{equation}
		x\prec^{\gamma(\beta)} y
		\;\Longrightarrow\;
		F_\beta(x)\le F_\beta(y),
	\end{equation}
	where
	$$
	F_\beta=U-\frac1\beta S
	$$
	is the free energy, 
	\begin{equation}
		U(x)=\sum_j x_j E_j,
		\qquad
		S(x)=-\sum_j x_j\log x_j
	\end{equation}
	with $U$ and $S$ denoting the mean energy and entropy, respectively.
	
	Choosing instead
	\begin{equation}
		g(t)
		=
		\frac{1}{\alpha-1}t^\alpha,
		\qquad
		\alpha>0,\,\, \alpha\neq 1,
	\end{equation}
	which is strictly convex on $t>0$, one obtains
	\begin{equation}
		x\prec^{\gamma(\beta)} y
		\;\Longrightarrow\;
		P_\alpha(x)\le P_\alpha(y),
	\end{equation}
	where
	\begin{equation}
		P_\alpha(x)
		=
		\frac{1}{\alpha-1}
		\frac{1}{Z(\beta)^{\alpha-1}}
		\sum_j x_j^\alpha e^{-\beta(1-\alpha)E_j},
	\end{equation}
	is a  linearised relative $\alpha$-entropy.
	
	In the language of physics, if $x$ is thermomajorised by $y$ at temperature $1/\beta$, then the free energy and the $\alpha$-entropies of $x$ do not exceed those of $y$. The following proposition shows that, in the zero-temperature limit, these entropic inequalities collapse to much simpler conditions.
	\begin{prop}
		\label{prop:fate_monotones}
		Let $x,y\in\Delta_{n-1}$ and suppose that $x\prec^{\gamma(\beta)} y$  for all sufficiently large $\beta$.  Then
		either
		$$
		U(x)<U(y),
		$$
		or
		$$
		U(x)=U(y)
		\quad\text{and}\quad
		S(x)\ge S(y).
		$$
		Moreover, 
		\begin{equation}
			\label{eq:revlex-condition}
			\left(
			\|x^1\|_1,
			\dots,
			\|x^k\|_1
			\right)
			<
			^{\mathrm{revlex}}
			\left(
			\|y^1\|_1,
			\dots,
			\|y^k\|_1
			\right),
		\end{equation}
		and
		\begin{equation}
			\label{eq:lex-condition}
			\left(
			\|y^1\|_1,
			\dots,
			\|y^k\|_1
			\right)
			<
			^{\mathrm{lex}}
			\left(
			\|x^1\|_1,
			\dots,
			\|x^k\|_1
			\right),
		\end{equation} 
		where $\|x^a\|_1=\sum_{j=1}^{m_a}x^a_j$ and $\|y^a\|_1=\sum_{j=1}^{m_a}y^a_j$, for all $a=1,\ldots,k$.
	\end{prop}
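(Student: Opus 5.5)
The plan is to take the two families of monotones already derived from \eqref{eq:thermo}, namely the free energy $F_\beta$ and the linearised relative $\alpha$-entropies $P_\alpha$, and extract the leading order of each inequality as $\beta\to\infty$. The hypothesis $x\prec^{\gamma(\beta)}y$ for all large $\beta$ gives
$$
F_\beta(x)\le F_\beta(y)\quad\text{and}\quad P_\alpha(x)\le P_\alpha(y)\qquad\text{for every fixed }\alpha\neq1 .
$$
Since $x,y$ do not depend on $\beta$, each inequality is an identity between explicit functions of $\beta$, and only their asymptotics are needed.

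\emph{Energy--entropy dichotomy.} I write $F_\beta(x)\le F_\beta(y)$ as
$$
U(x)-U(y)\le \tfrac1\beta\bigl(S(x)-S(y)\bigr).
$$
Letting $\beta\to\infty$ gives $U(x)\le U(y)$. If $U(x)=U(y)$, I multiply the inequality by $\beta>0$ to get $S(x)\ge S(y)$. This is routine.

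\emph{Block-sum orderings.} Let $\varepsilon_1>\cdots>\varepsilon_k$ be the distinct energy levels, with $\varepsilon_a$ the common energy of block $a$. Then
$$
P_\alpha(x)=\frac{Z(\beta)^{1-\alpha}}{\alpha-1}\sum_{a=1}^k N_\alpha(x^a)\,e^{-\beta(1-\alpha)\varepsilon_a},\qquad N_\alpha(x^a):=\sum_{j}(x^a_j)^\alpha .
$$
The positive factor $Z(\beta)^{1-\alpha}$ cancels from both sides.

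For $\alpha>1$, the exponentials $e^{\beta(\alpha-1)\varepsilon_a}$ are separated in scale, with block $1$ dominant, then block $2$, and so on. Dividing by the dominant factor and letting $\beta\to\infty$ block by block gives a lexicographic comparison of the vectors $(N_\alpha(x^a))_a$ and $(N_\alpha(y^a))_a$: the first block in which they differ decides.

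For $\alpha<1$, the prefactor $1/(\alpha-1)$ is negative, so the inequality flips. The dominant block is now block $k$, so the same argument yields a reverse-lexicographic comparison.

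Finally I let $\alpha\to1^{\pm}$. Since $N_\alpha(x^a)\to\|x^a\|_1$, these become comparisons of the block-sum vectors $(\|x^a\|_1)_a$ and $(\|y^a\|_1)_a$. I will read off the direction of each comparison against the conventions for $<^{\mathrm{lex}}$ and $<^{\mathrm{revlex}}$ fixed before Theorem~\ref{prop:monotones}, and so obtain \eqref{eq:revlex-condition} and \eqref{eq:lex-condition}.

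\emph{Main obstacle.} The delicate step is passing from the $\alpha$-comparisons to strict inequalities of block sums. Two issues arise.
\begin{enumerate}
\item If the block-sum vectors agree in a block, I must move to the next block. This requires justifying the limit $\alpha\to1$ uniformly while preserving strictness; ties at $\alpha=1$ can be broken by $\alpha$-norm differences.
\item Strictness fails outright when $(\|x^a\|_1)_a=(\|y^a\|_1)_a$, so the strict relations should be understood for distinct block profiles.
\end{enumerate}
To cross-check orientation and strictness directly, I would also use the direct route via Theorem~\ref{thm:main}. Any $P^\beta\in\mathcal S^{\gamma(\beta)}$ with $x=yP^\beta$ has a cluster point $P\in\mathcal T_\mu$, with $x=yP$ by compactness. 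Since $P$ is block-upper-triangular, the partial block sums satisfy $\sum_{a\le l}\|x^a\|_1\le\sum_{a\le l}\|y^a\|_1$. This pins down the first and last differing blocks, and hence both orderings, independently of the $\alpha$-limits.
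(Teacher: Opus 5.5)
Your energy--entropy argument is correct and matches the paper's. The gap is in the block-mass part, at the step where you plan to \emph{read off the direction} and arrive at \eqref{eq:revlex-condition} and \eqref{eq:lex-condition}. With the paper's conventions for $<^{\mathrm{lex}}$ and $<^{\mathrm{revlex}}$ this step cannot succeed, because the printed orientation is backwards. Take $n=2$, $E_1>E_2$, $y=(1,0)$, $x=(0,1)$. Then $P^\beta=\bigl(\begin{smallmatrix}0&1\\ \gamma_1/\gamma_2&1-\gamma_1/\gamma_2\end{smallmatrix}\bigr)$ is row-stochastic and $\gamma(\beta)$-preserving, with $yP^\beta=x$ for every $\beta$. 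Yet $(0,1)<^{\mathrm{revlex}}(1,0)$ and $(1,0)<^{\mathrm{lex}}(0,1)$ are both false.

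Your own set-up points the other way. For $\alpha>1$, the first block with $N_\alpha(x^a)\neq N_\alpha(y^a)$ must have $N_\alpha(x^a)<N_\alpha(y^a)$. For $\alpha<1$, after the sign flip, the last such block must have $N_\alpha(x^a)>N_\alpha(y^a)$. Likewise, your cluster-point argument gives $\sum_{a\le l}\|x^a\|_1\le\sum_{a\le l}\|y^a\|_1$ for all $l$, which is the forward half of Theorem~\ref{prop:monotones}. Hence, whenever the block-mass vectors differ, $(\|x^a\|_1)_a<^{\mathrm{lex}}(\|y^a\|_1)_a$ and $(\|y^a\|_1)_a<^{\mathrm{revlex}}(\|x^a\|_1)_a$. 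That is \eqref{eq:revlex-condition}--\eqref{eq:lex-condition} with $x$ and $y$ interchanged. Combined with your correct remark that strictness fails when the block masses coincide (e.g.\ $x=y$), this corrected statement is what can actually be proved. The paper's two-line proof uses the same dominant-sector heuristic and does not settle the orientation either: it neither tracks the sign of $1/(\alpha-1)$ nor justifies replacing $\sum_j (x^a_j)^\alpha$ by $\|x^a\|_1$.

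Independently of orientation, the $\alpha\to1$ step does not just need care; it fails. Lexicographic order is not closed under limits. Worse, a block with equal masses but different $\alpha$-norms can decide every $\alpha$-comparison before the first block with unequal masses is reached. So breaking ties by $\alpha$-norm differences is exactly the failure mode. For example, take $\mu=(2,1,1,2)$ and, with blocks separated by semicolons, $y=(0.2,0;\,0.1;\,0.2;\,0.5,0)$ and $x=(0.1,0.1;\,0.2;\,0.1;\,0.25,0.25)$. For each fixed $\alpha\neq1$ one has $P_\alpha(x)\le P_\alpha(y)$ for all large $\beta$: block $1$ decides when $\alpha>1$, and block $4$ when $\alpha<1$. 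Nevertheless $\sum_{a\le 2}\|x^a\|_1=0.4>0.3=\sum_{a\le2}\|y^a\|_1$, so no block-mass ordering can be extracted from the $\alpha$-family alone.

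Drop the $\alpha$-limit and make your cross-check the proof:
\begin{enumerate}
\item Pick $P^\beta\in\mathcal S^{\gamma(\beta)}$ with $x=yP^\beta$.
\item Pass to a cluster point $P$; by the $\operatorname{Ls}$ inclusion underlying Theorem~\ref{thm:main}, $P\in\mathcal T_\mu$, and by continuity $x=yP$.
\item Block-upper-triangularity plus row-stochasticity give the partial block-mass inequalities.
\end{enumerate}
Both correctly oriented orderings follow from these inequalities, and so does $U(x)\le U(y)$, by Abel summation.
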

	Thus, in the low-temperature limit, the mean energy is the primary monotone in the free energy, while entropy provides the subleading order within a fixed energy sector.
	Indeed, if $U(x)=U(y)$, then
	\begin{equation}
		F_\beta(y)-F_\beta(x)
		=
		\frac1\beta\bigl(S(x)-S(y)\bigr),
	\end{equation}
	so transitions preserving the mean energy can only increase entropy. Restricted to a fixed-energy subspace, the dynamics reduces to bistochastic mixing, which drives states toward the uniform distribution on that subspace. Decreasing entropy therefore requires population transfer to lower energy sectors.
	
	The $\alpha$-entropic inequalities all collapse to a reverse lexicographic condition on the block sums when $1-\alpha>0$, and a lexicographic condition when $1-\alpha<0$.
	
	\section{Discussion of Gibbs-preserving matrices in low dimension} \label{sec:examples}
	
	\subsection{BT-matrices in dimensions $n=3$ and $n=4$}
	
	For $n=3$, there are four possible types of degeneracies of 
	$E_1 \ge E_2 \ge E_3 $:
	\begin{table}[h]
		\centering
		\begin{tabular}{r|c|c|c}
			$\mu$ &  $\#$ blocks $k$ & energy structure & $\#\operatorname{ext}(\mathcal{T}_{\mu})$ \\
			\hline
			$(1,1,1)$ & 3 & $E_1>E_2>E_3$ & 6\\
			$(1,2)$ & 2 & $E_1>E_2=E_3$& 6\\
			$(2,1)$ & 2 & $E_1=E_2>E_3$& 7\\
			$(3)$ & 1 & $E_1=E_2=E_3$ & 6 
		\end{tabular}
	\end{table}

	We now describe the limiting semigroups $\mathcal T_\mu$.
	The pattern is always the same: blocks above the energy ordering survive, blocks below vanish, and diagonal blocks carry the column-substochastic constraints. The last diagonal block is bistochastic.
	\begin{align}
		\mathcal T_{(1,1,1)}&=
		\left\{
		\begin{pmatrix}
			* & * & *\\
			0 & * & * \\
			0 & 0 & * \\
		\end{pmatrix}
		\text{ row-stochastic}
		\right\},\\
		\mathcal T_{(1,2)}&=
		\left\{
		\begin{pmatrix}
			* & * & * \\
			0 & \square & \square\\
			0 & \square & \square
		\end{pmatrix}
		\text{ row-stochastic}
		\colon
		\begin{pmatrix}
			\square  & \square \\
			\square  & \square 
		\end{pmatrix}
		\text{ bistochastic}
		\right\},\nonumber\\
		\mathcal T_{(2,1)}&=
		\left\{
		\begin{pmatrix}
			\triangle & \triangle  & *\\
			\triangle & \triangle & *\\
			0 & 0 & 1
		\end{pmatrix}
		\text{ row-stochastic}
		\colon
		\begin{pmatrix}
			\triangle & \triangle \\
			\triangle & \triangle \\
		\end{pmatrix}
		\text{ column-substochastic}
		\right\},\nonumber\\
		\mathcal T_{(3)}&=\left\{
		\begin{pmatrix}
			\square & \square  & \square\\
			\square & \square & \square\\
			\square & \square & \square
		\end{pmatrix} \text{ 
			bistochastic}\right\}.\nonumber
	\end{align}
	
	For $n=4$, there are eight possible types of degeneracies of 
	$E_1 \ge E_2 \ge E_3 \ge E_4$:
	
	\begin{table}[h]
		\centering
		\begin{tabular}{r|c|c|c}
			$\mu$ & Number of blocks $k$ & energy structure &  $\#\operatorname{ext}(\mathcal{T}_{\mu})$ \\
			\hline
			$(1,1,1,1)$ & 4 & $E_1>E_2>E_3>E_4$ & 24\\
			$(1,1,2)$ & 3 & $E_1>E_2>E_3=E_4$ &24\\
			$(1,2,1)$ & 3 & $E_1>E_2=E_3>E_4$ &28\\
			$(1,3)$ & 2 & $E_1>E_2=E_3=E_4$ & 24\\
			$(2,1,1)$ & 3 & $E_1=E_2>E_3>E_4$ & 28\\
			$(2,2)$ & 2 & $E_1=E_2>E_3=E_4$ & 28\\
			$(3,1)$ & 2 & $E_1=E_2=E_3>E_4$ & 34\\
			$(4)$ & 1 & $E_1=E_2=E_3=E_4$ & 24
		\end{tabular}
	\end{table}
	
	The limiting semigroups $\mathcal T_\mu$ are:
	\begin{align}
			\mathcal T_{(1,1,1,1)}&=
			\left\{
			\begin{pmatrix}
				* & * & * & *\\
				0 & * & * & *\\
				0 & 0 & * & *\\
				0 & 0 & 0 & *
			\end{pmatrix}
			\text{ row-stochastic}
			\right\},\\
			\mathcal T_{(1,1,2)}&=
			\left\{
			\begin{pmatrix}
				* & * & * & *\\
				0 & * & * & *\\
				0 & 0 & \square & \square\\
				0 & 0 & \square & \square
			\end{pmatrix}
			\text{ row-stochastic}
			\colon
			\begin{pmatrix}
				\square  & \square \\
				\square  & \square 
			\end{pmatrix}
			\text{  bistochastic}
			\right\},\nonumber\\
			\mathcal T_{(1,2,1)}&=
			\left\{
			\begin{pmatrix}
				* & * & * & *\\
				0 & \triangle & \triangle & *\\
				0 & \triangle & \triangle & *\\
				0 & 0 & 0 & 1
			\end{pmatrix}
			\text{ row-stochastic}
			\colon
			\begin{pmatrix}
				\triangle & \triangle\\
				\triangle & \triangle
			\end{pmatrix}
			\text{ column-substochastic}
			\right\},\nonumber\\
			\mathcal T_{(1,3)}&=
			\left\{
			\begin{pmatrix}
				* & * & * & *\\
				0 & \square &  \square &  \square\\
				0 & \square & \square &  \square\\
				0 &  \square &  \square &  \square
			\end{pmatrix}
			\text{ row-stochastic}
			\colon
			\begin{pmatrix}
				\square &  \square &  \square\\
				\square &  \square &  \square\\
				\square &  \square &  \square
			\end{pmatrix}
			\text{  bistochastic}
			\right\},\nonumber\\
			\mathcal T_{(2,1,1)}&=
			\left\{
			\begin{pmatrix}
				\triangle & \triangle & * & *\\
				\triangle & \triangle & * & *\\
				0 & 0 & * & *\\
				0 & 0 & 0 & 1
			\end{pmatrix}
			\text{ row-stochastic}
			\colon
			\begin{pmatrix}
				\triangle & \triangle\\
				\triangle & \triangle
			\end{pmatrix}
			\text{ column-substochastic}
			\right\},\nonumber\\
			\mathcal T_{(2,2)}&=
			\left\{
			\begin{pmatrix}
				\triangle & \triangle  & * & *\\
				\triangle  & \triangle  & * & *\\
				0 & 0 & \square & \square\\
				0 & 0 & \square & \square
			\end{pmatrix}
			\text{ row-stochastic}
			\colon
			\begin{aligned}
				&
				\begin{pmatrix}
					\triangle & \triangle\\
					\triangle & \triangle
				\end{pmatrix} \text{ column-substochastic},\\
				&
				\begin{pmatrix}
					\square & \square \\
					\square  & \square 
				\end{pmatrix} \text{ bistochastic}
			\end{aligned}
			\right\},\nonumber\\
			\mathcal T_{(3,1)}&=
			\left\{
			\begin{pmatrix}
				\triangle & \triangle & \triangle & *\\
				\triangle & \triangle & \triangle & *\\
				\triangle & \triangle & \triangle & *\\
				0 & 0 & 0 & 1
			\end{pmatrix}
			\text{ row-stochastic}
			\colon
			\begin{pmatrix}
				\triangle & \triangle & \triangle\\
				\triangle & \triangle & \triangle\\
				\triangle & \triangle & \triangle
			\end{pmatrix}
			\text{ column-substochastic}
			\right\},\nonumber\\
			\mathcal T_{(4)}&=\left\{
			\begin{pmatrix}
				\square & \square & \square & \square\\
				\square & \square & \square & \square\\
				\square & \square & \square & \square\\
				\square & \square & \square & \square
			\end{pmatrix} \text{ bistochastic}\right\}.\nonumber
		\end{align}

		\subsection{Finite-temperature corrections} 
		
		The zero-temperature limit captures the leading-order behaviour of Gibbs-preserving transformations at low temperatures in the same way that permutations capture leading-order behaviour at high temperatures. A natural next step is to understand finite-temperature corrections: can one describe asymptotic expansions of the semigroup and of the induced preorder in powers of $e^{-\beta}$? Such corrections would reveal how BT-majorisation deforms into thermomajorisation. 
		
		It is well known that at infinite temperature the extreme operations are precisely permutations, i.e. the vertices of the Birkhoff polytope. This statement is independent of the energy type of the system. At the opposite end, in the nondegenerate zero-temperature limit, the extremal operations are given by upper triangular stochastic matrices. The BT-stochastic matrices introduced above close this gap at the level of the limiting theory: for a Gibbs family of type $\mu=(m_1,\ldots,m_k)$, they give the limiting semigroup $T_\mu$, and hence an exact description of the extreme operations which survive as $\beta\to\infty$. Away from these limiting regimes, however, the structure of extreme operations is  harder to characterise.
		
		A useful perspective on the emergence of finite-temperature extremal operations is provided by the linear-programming construction of Ref.~\cite{mazurek2018decomposability}. Alternatively, one can use construction of Jurkat and Ryser~\cite{Jurkat1967}. In the zero- and infinite-temperature regimes, the vertices of the corresponding polytopes are known explicitly. One may then follow individual extremal branches into the finite-temperature region by perturbing a selected extremal operation while preserving both stochasticity and the stationary state. Extremality is maintained by saturating the maximal number of admissible positivity constraints, which amounts to selecting subsets of matrix entries that remain fixed at zero. Repeating this procedure over all admissible choices generates the full graph of extremal branches emanating from the limiting vertices. 
		
		Not all such branches persist throughout the entire temperature range. As the temperature changes, positivity may fail at isolated critical inverse temperatures, leading to the coalescence or branching of extremal operations. These critical points occur whenever
		%
		two partial sums of the invariant state coincide. In the Gibbs case, for two subsets $A,B\subset\{1,\ldots,n\}$, after removing common indices, such critical points are solutions of 
		
		\begin{equation}
			\sum_{a\in A}\gamma_a(\beta) = \sum_{b\in B}\gamma_b(\beta) \quad\Longleftrightarrow\quad
			\sum_{a\in A}e^{-\beta E_a} = \sum_{b\in B}e^{-\beta E_b}, 
		\end{equation}
		since the Gibbs normalisation cancels. 
		
		The argument which bounds the possible critical temperatures, however, does not rely on the exponential form itself. What matters is the multiscale structure of the invariant distribution. We therefore formulate the estimate for a general family of type $\mu=(m_1,\ldots,m_k)$ in the sense of Section~\ref{sec:multiscale}. To this end we define the block masses 
		
		\begin{equation}
			w_a(\beta):=\|\gamma^a(\beta)\|_1 . 
		\end{equation}
		
		Let $I_a$ denote the set of indices in the $a$-th block. For disjoint $A,B\subset\{1,\ldots,n\}$, define 
		
		\begin{equation}
			d_a^{A,B}(\beta) := \frac{1}{w_a(\beta)}\qty(\sum_{i\in A\cap I_a} \gamma_i(\beta) - \sum_{i\in B\cap I_a} \gamma_i(\beta)) . 
		\end{equation}
		Then the criticality condition becomes 
		
		\begin{equation}
			\sum_{a=1}^k d_a^{A,B}(\beta)\,w_a(\beta)=0 . 
		\end{equation}
		
		Assume, in addition, that for every $a<r$ the relative scale functions 
		
		\begin{equation}
			\rho_{ar}(\beta) := \frac{w_a(\beta)}{w_r(\beta)} 
		\end{equation}
		are continuous, non-increasing in $\beta$, and satisfy $\rho_{ar}(\beta)\to0$ as $\beta\to\infty$. Suppose that the criticality equation is nontrivial, and let $r$ be the largest block index for which $d_r^{A,B}\neq0$. Changing the roles of $A$ and $B$ if necessary, assume $d_r^{A,B}>0$. Dividing by $w_r(\beta)$ gives 
		
		\begin{equation}
			d_r^{A,B}(\beta) = -\sum_{a<r} d_a^{A,B}(\beta)\, \rho_{ar}(\beta). 
		\end{equation}
		Hence, for all sufficiently large $\beta$, 
		
		\begin{equation}\label{eq:gen_max_crit}
			0<c_r \leq \sum_{a<r} C_a\,\rho_{ar}(\beta) \equiv R_r(\beta), 
		\end{equation}
		where $c_r>0$ is a lower bound for $d_r^{A,B}(\beta)$ and the constants $C_a$ bound the possible absolute contribution from the higher-energy blocks. Since $R_r(\beta)\to0$, no nontrivial critical inverse temperature associated with the block $r$ can occur beyond the threshold determined by 
		
		\begin{equation}
			R_r(\beta_r)=c_r, 
		\end{equation}
		whenever such a solution exists. If the functions $R_r$ are strictly decreasing, this threshold is unique. We now specialise this general multiscale estimate back to the Gibbs family. Let $E_1>\cdots>E_k$ be the distinct energy values, ordered as in the rest of the paper, with multiplicities $m_1,\ldots,m_k$. In the Gibbs case each block profile is flat, which yields
		
		\begin{equation}
			w_a(\beta) = \frac{m_a e^{-\beta E_a}}{Z(\beta)}\qq{and} \rho_{ar}(\beta) = \frac{m_a}{m_r} e^{-\beta(E_a-E_r)}.
		\end{equation}
		Equivalently, multiplying out the flat block factors, the criticality condition can be written as 
		
		\begin{equation}
			\sum_{a=1}^k c_a e^{-\beta E_a}=0, 
		\end{equation}
		where $
		c_a := \abs{\{a\in A:E_a=E_a\}} - \abs{\{b\in B:E_b=E_a\}}$.
		If all $c_a$ vanish, the equality is an identity caused solely by degeneracies and does not define an isolated critical temperature. Otherwise, Eq. \eqref{eq:gen_max_crit} turns into
		
		\begin{equation}
			1 \leq c_r \leq \sum_{a<r} \abs{c_a}e^{-\beta(E_a-E_r)} \leq \sum_{a<r} m_a e^{-\beta(E_a-E_r)} \equiv R_r(\beta). 
		\end{equation}
		Here $E_a>E_r$ for $a<r$, hence $R_r(\beta)$ is continuous and strictly decreasing whenever there is at least one higher energy level, with $\lim_{\beta\to\infty}R_r(\beta)=0$. Therefore, if $R_r(0)>1$, there is a unique positive solution $\beta_r$ of 
		
		\begin{equation*}
			R_r(\beta_r)=1 . 
		\end{equation*}
		Every nontrivial Gibbs critical inverse temperature $\beta_*$ must therefore satisfy 
		
		\begin{equation}
			\label{eq:minimal_crit_temp} \beta_* \leq \max_r \beta_r =: \beta_{\rm low}. 
		\end{equation}
		Conversely, each such threshold is realised by taking $A$ to contain a single state at energy $E_r$, and $B$ to contain all states with energies strictly larger than $E_r$. Thus, in the Gibbs case, the abstract multiscale obstruction has a simple exponential form: a selected population in the block $E_r$ is balanced against the cumulative Gibbs weight of higher-energy blocks. At such a critical temperature, additional extremal operations of the type described in \cite{mazurek2018decomposability} appear. In the zero-temperature limit these coincidences disappear, and the only extremal structure which survives is the BT-stochastic one encoded by $T_\mu$.
		
			The sequence of critical temperatures also suggests a constructive procedure for determining the finite-temperature extremal structure. Let 
			$$ \beta_{\rm low}=: \beta_0 > \beta_1 > \cdots > \beta_N := \beta_{\rm high} $$ 
			denote the ordered list of all critical inverse temperatures. 
			
			Starting from the low-temperature regime, one first determines the extremal BT-stochastic operations associated with the semigroup $\mathcal{T}_\mu$. These serve as root vertices for the subsequent construction. One then proceeds through the temperature intervals $(\beta_{j+1},\beta_j)$ in succession: 
			
			\begin{enumerate} 
				\item At a critical point $\beta_j$, identify all candidate extremal operations whose positivity constraints become saturated and which cease to be positive for $\beta>\beta_j$. 
				
				\item Treat these operations as root vertices at $\beta=\beta_j$ and apply the linear-programming construction of Ref.~\cite{mazurek2018decomposability} to generate all admissible extremal branches on the interval $(\beta_{j+1},\beta_j)$. 
				
				\item Within the interval, follow each branch continuously while maintaining stochasticity and stationarity with respect to the Gibbs state. 
				
				\item Determine which branches terminate at the next critical temperature $\beta_{j+1}$ and which remain positive throughout the interval. 
				
				\item Use all operations becoming critical at $\beta_{j+1}$ as the new collection of root vertices and repeat the procedure. 
			\end{enumerate} 
			Iterating this construction from $\beta_0$ down to $\beta_N$ generates, in principle, the complete graph of extremal Gibbs-preserving operations across the entire temperature range. The resulting structure may then be verified by performing the same construction in the opposite direction, starting from the permutation vertices governing the infinite-temperature limit and propagating the extremal branches towards lower temperatures.

			In particular, for $n=3$ there is only one  critical temperature $\beta_*$ defined by $\exp[-\beta(E_2 - E_3)] + \exp[-\beta(E_1 - E_3)] = 1$, thus defining two regimes $\beta \leq \beta_*$ and $\beta \geq \beta_*$. 
			Complete code used to generate extreme operations is available at \cite{github}; the code is restricted to roots at very high and very low temperature regimes. Resulting branching of extreme operations is summarised in graphs shown in Figs. \ref{fig:flowchart_21_12} and \ref{fig:flowchart_111}.
			Throughout the following graphs we will use convention
			\begin{equation}
				\begin{aligned}
					P_1 & = \mqty(
					1 & 0 & 0 \\
					0 & 1 & 0 \\
					0 & 0 & 1), & 
					P_2 & = \mqty(
					1 & 0 & 0 \\
					0 & 0 & 1 \\
					0 & 1 & 0), &
					P_3 & = \mqty(
					0 & 1 & 0 \\
					1 & 0 & 0 \\
					0 & 0 & 1), \\
					P_4 & = \mqty(
					0 & 0 & 1 \\
					1 & 0 & 0 \\
					0 & 1 & 0
					), & 
					P_5 & = \mqty(
					0 & 1 & 0 \\
					0 & 0 & 1 \\
					1 & 0 & 0
					), &
					P_6 & = \mqty(
					0 & 0 & 1 \\
					0 & 1 & 0 \\
					1 & 0 & 0
					).
				\end{aligned}
			\end{equation}
			
			We begin with the case where degeneracy happens among the excited levels, corresponding to the $(2,1)$ structure $E_2 = E_1$, thus prompting the critical temperature to be given by
			\begin{equation}
				\beta_* = \frac{\log(2)}{E_1 - E_3} 
			\end{equation}
			where $E_3$ is the ground state energy and $E_1$ is the excited level energy. This provides us with the extreme operation flowchart as presented in the left panel of Fig. \ref{fig:flowchart_21_12}.
			
			\begin{figure}[h]
				\centering
				\includegraphics[width=.475\linewidth]{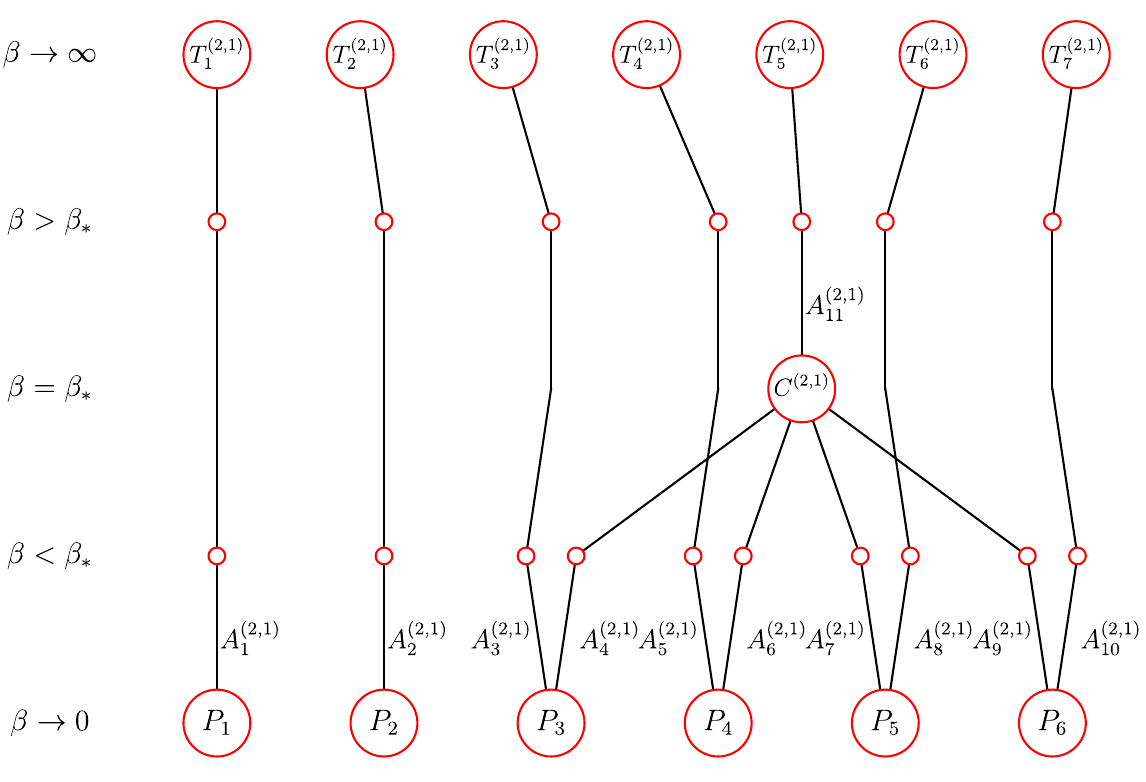}
				\hspace{.02\linewidth}
				\includegraphics[width=.475\linewidth]{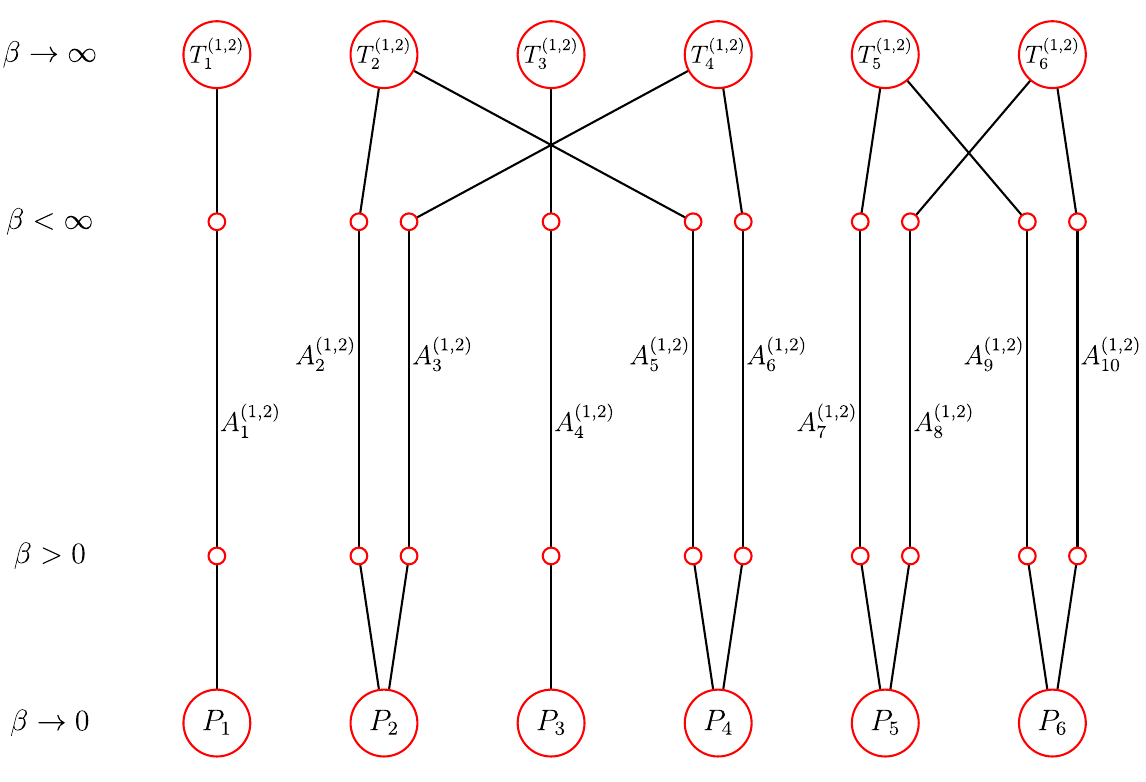}
				\caption{Flowchart for extreme points of $\mathcal{S}^{\gamma(\beta)}$ for $n=3$ under level degeneracy with $\mu = (2,1)$ \textit{[left panel]} and $\mu = (1,2)$ \textit{[right panel]}, corresponding to doubly degenerate excited and ground levels, respectively. Definitions of $A^{(2,1)}_i$ and $A^{(1,2)}_i$ operations are given in equations \eqref{eq:21_ext_ops} and \eqref{eq:12_ext_ops}, respectively.}
				\label{fig:flowchart_21_12}
			\end{figure}
			We note that $P_1 = A_1^{(2,1)} = T_1^{(2,1)}$ and $P_2 = A_2^{(2,1)} = T_2^{(2,1)}$ as both represent valid permutations restricted to the excited level subset. Below we list all extreme operations
			\begin{equation}\label{eq:21_ext_ops}
				{\scriptsize
					\begin{aligned}
						A^{(2,1)}_3 & = \left(
						\begin{array}{ccc}
							1-\frac{\gamma_1}{\gamma_3} & \frac{\gamma_1}{\gamma_3} & 0 \\
							1 & 0 & 0 \\
							0 & 0 & 1 \\
						\end{array}
						\right) &
						A^{(2,1)}_4 & = \left(
						\begin{array}{ccc}
							0 & \frac{\gamma_1}{\gamma_3} & 1-\frac{\gamma_1}{\gamma_3} \\
							1 & 0 & 0 \\
							\frac{\gamma_3}{\gamma_1}-1 & 0 & 2-\frac{\gamma_3}{\gamma_1} \\
						\end{array}
						\right) & 
						A^{(2,1)}_5 & = \left(
						\begin{array}{ccc}
							1-\frac{\gamma _1}{\gamma _3} & 0 & \frac{\gamma _1}{\gamma _3} \\
							1 & 0 & 0 \\
							0 & 1 & 0 \\
						\end{array}
						\right) \\
						A^{(2,1)}_6 & = \left(
						\begin{array}{ccc}
							0 & 1-\frac{\gamma_1}{\gamma_3} & \frac{\gamma_1}{\gamma_3} \\
							1 & 0 & 0 \\
							\frac{\gamma_3}{\gamma_1}-1 & 2-\frac{\gamma_3}{\gamma_1} & 0 \\
						\end{array}
						\right) &
						A^{(2,1)}_7 & = \left(
						\begin{array}{ccc}
							0 & \frac{\gamma_1}{\gamma_3} & 1-\frac{\gamma_1}{\gamma_3} \\
							\frac{\gamma_3}{\gamma_1}-1 & 0 & 2-\frac{\gamma_3}{\gamma_1} \\
							1 & 0 & 0 \\
						\end{array}
						\right) &
						A^{(2,1)}_8 & = \left(
						\begin{array}{ccc}
							1-\frac{\gamma_1}{\gamma_3} & \frac{\gamma_1}{\gamma_3} & 0 \\
							0 & 0 & 1 \\
							1 & 0 & 0 \\
						\end{array}
						\right) \\ 
						A^{(2,1)}_9 & = \left(
						\begin{array}{ccc}
							0 & 1-\frac{\gamma_1}{\gamma_3} & \frac{\gamma_1}{\gamma_3} \\
							\frac{\gamma_3}{\gamma_1}-1 & 2-\frac{\gamma_3}{\gamma_1} & 0 \\
							1 & 0 & 0 \\
						\end{array}
						\right) &
						A^{(2,1)}_{10} &  = \left(
						\begin{array}{ccc}
							1-\frac{\gamma_1}{\gamma_3} & 0 & \frac{\gamma_1}{\gamma_3} \\
							0 & 1 & 0 \\
							1 & 0 & 0 \\
						\end{array}
						\right) & 
						A^{(2,1)}_{11} & = \left(
						\begin{array}{ccc}
							1-\frac{2 \gamma_1}{\gamma_3} & \frac{\gamma_1}{\gamma_3} & \frac{\gamma_1}{\gamma_3} \\
							1 & 0 & 0 \\
							1 & 0 & 0 \\
						\end{array}
						\right).
				\end{aligned}}
			\end{equation}
			The subset $\qty{A_1^{(2,1)},\hdots,A_{10}^{(2,1)}}$ provides the high-temperature limit of the operations existing for $\beta \leq \beta_*$, while restriction to $\qty{A_1^{(2,1)}, A_2^{(2,1)}, A_3^{(2,1)}, A_5^{(2,1)}, A_8^{(2,1)}, A_{10}^{(2,1)}, A_{11}^{(2,1)}}$ is a valid set of extreme points for $\beta > \beta_*$, with each of them giving rise to one of the extreme operations in $\beta\rightarrow\infty$ limit, which can be read off immediately by setting $\gamma_e = 0$ and $\gamma_g = 1$. At the same time, let us note that at $\beta = \beta_*$
			\begin{equation}
				A_4^{(2,1)} = A_6^{(2,1)} = A_7^{(2,1)} = A_9^{(2,1)} = A_{11}^{(2,1)} = C^{(2,1)} = \mqty(
				0 & \frac{1}{2} & \frac{1}{2} \\ 
				1 & 0 & 0 \\ 
				1 & 0 & 0)
			\end{equation}
			
			Next, we proceed to the case of ground state degeneracy $E_2 = E_3$ or $\mu = (1,2)$. This  structure leads to vanishing of the critical temperature, resulting only in splittings at intermediate temperatures, shown in the flow graph in the right panel of Fig.~\ref{fig:flowchart_21_12}.
			
			In this case persistent permutations correspond to $P_1 = A_1^{(1,2)} = T_1^{(1,2)}$ and $P_3 = A_4^{(1,2)} = T_3^{(1,2)}$, with the remaining operations given by
			
			\begin{equation}\label{eq:12_ext_ops}
				{\scriptsize
					\begin{aligned}
						A^{(1,2)}_2 & = \left(
						\begin{array}{ccc}
							1 & 0 & 0 \\
							0 & 1-\frac{\gamma_1}{\gamma_3} & \frac{\gamma_1}{\gamma_3} \\
							0 & 1 & 0 \\
						\end{array}
						\right) & 
						A^{(1,2)}_3 & = \left(
						\begin{array}{ccc}
							\frac{\gamma_1}{\gamma_3} & 1-\frac{\gamma_1}{\gamma_3} & 0 \\
							1-\frac{\gamma_1}{\gamma_3} & 0 & \frac{\gamma_1}{\gamma_3} \\
							0 & 1 & 0  
						\end{array}
						\right) \\
						A^{(1,2)}_5 & = \left(
						\begin{array}{ccc}
							1-\frac{\gamma_1}{\gamma_3} & 0 & \frac{\gamma_1}{\gamma_3} \\
							\frac{\gamma_1}{\gamma_3} & 1-\frac{\gamma_1}{\gamma_3} & 0 \\
							0 & 1 & 0 \\
						\end{array}
						\right) &
						A^{(1,2)}_6 & = \left(
						\begin{array}{ccc}
							0 & 1-\frac{\gamma_1}{\gamma_3} & \frac{\gamma_1}{\gamma_3} \\
							1 & 0 & 0 \\
							0 & 1 & 0 \\
						\end{array}
						\right) & 
						A^{(1,2)}_7 & = \left(
						\begin{array}{ccc}
							1-\frac{\gamma_1}{\gamma_3} & \frac{\gamma_1}{\gamma_3} & 0 \\
							0 & 1-\frac{\gamma_1}{\gamma_3} & \frac{\gamma_1}{\gamma_3} \\
							1 & 0 & 0 
						\end{array}
						\right) \\ 
						A^{(1,2)}_8 & = \left(
						\begin{array}{ccc}
							0 & 1 & 0 \\
							1-\frac{\gamma_1}{\gamma_3} & 0 & \frac{\gamma_1}{\gamma_3} \\
							1 & 0 & 0 \\
						\end{array}
						\right) &
						A^{(1,2)}_9 & = \left(
						\begin{array}{ccc}
							1-\frac{\gamma_1}{\gamma_3} & 0 & \frac{\gamma_1}{\gamma_3} \\
							0 & 1 & 0 \\
							1 & 0 & 0 \\
						\end{array}
						\right) & 
						A^{(1,2)}_{10} & = \left(
						\begin{array}{ccc}
							0 & 1-\frac{\gamma_1}{\gamma_3} & \frac{\gamma_1}{\gamma_3} \\
							1-\frac{\gamma_1}{\gamma_3} & \frac{\gamma_1}{\gamma_3} & 0 \\
							1 & 0 & 0 \\
						\end{array}
						\right).
				\end{aligned}}
			\end{equation}
			All operations therefore interpolate between a permutation at $\gamma_e = \gamma_g$ and a block-upper-triangular matrix when $\gamma_g = 1-\gamma_e = 1$.
			
			We finish consideration of extreme operation flow in $n=3$ by putting forward an independent reproduction of extreme operation structure presented in \cite{mazurek2018decomposability}. Below we skip the labeling, which can be found therein.
			
			\begin{figure}[h]
				\centering
				\includegraphics[width=.5\linewidth]{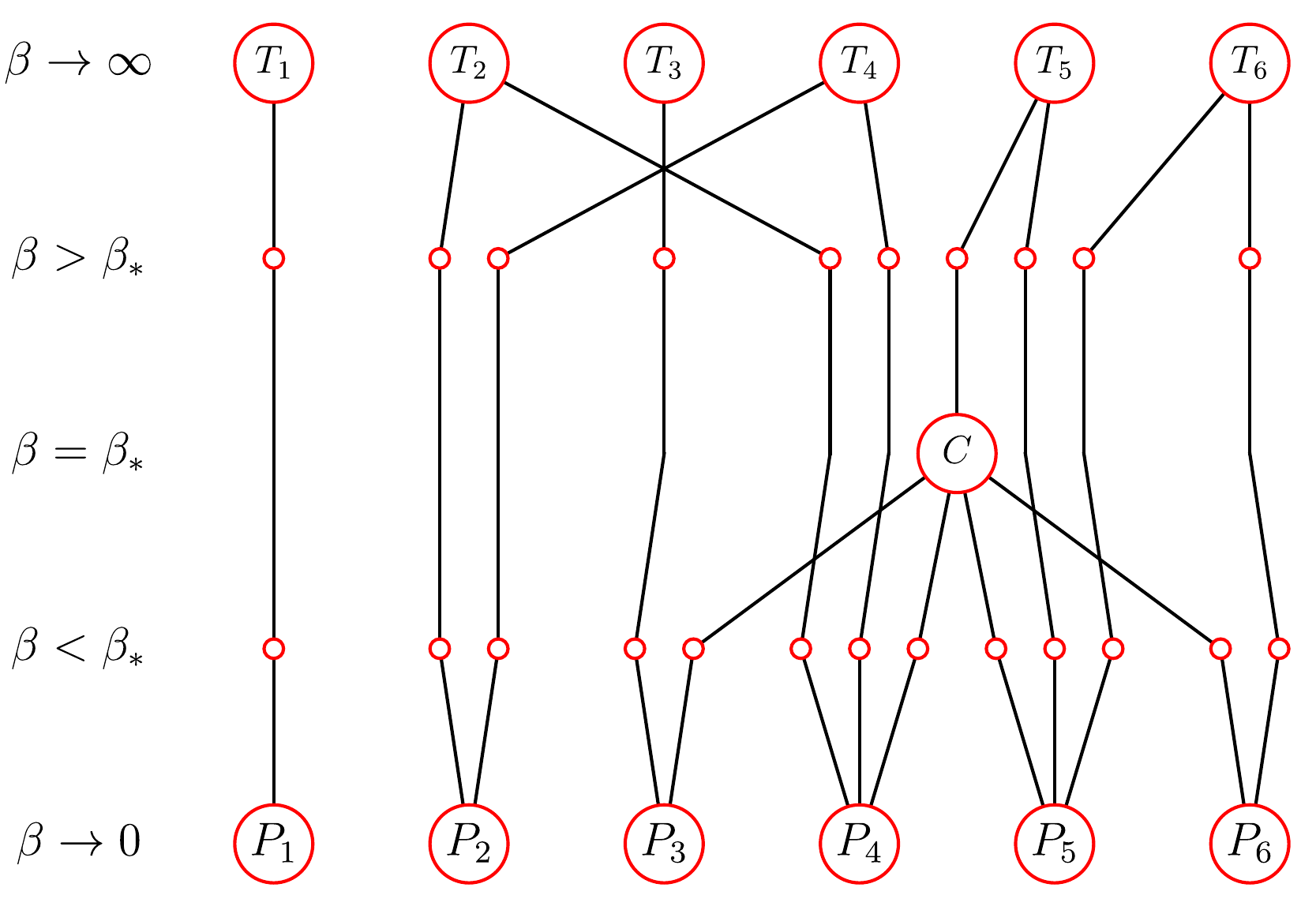}
				\caption{Flowchart for extreme points of $\mathcal{S}^{\gamma(\beta)}$ for $n=3$ with $\mu = (1,1,1)$, i.e. the nondegenerate three-level case. The structure matches the flowchart given in~\cite{mazurek2018decomposability}.}
				\label{fig:flowchart_111}
			\end{figure}
			
			
			For $n=4$ the situation is far more complicated, as generically one will encounter a maximum of 7 different critical temperatures, with merging of distinct temperatures subject not only to degeneracy of energy levels but also specific structure of the energy gaps.  
			
			
			\begin{table}[ht]
				\centering
				\begin{tabular}{r|c|c|c|c}
					$\mu$ & $\beta \rightarrow \infty$ & $\beta > \beta_{\rm low}$ & $\beta < \beta_{\rm high}$ & $\beta = 0$   \\
					\hline
					$(1,1,1,1)$ & $24$ & $118$ & $262$ & \multirow{8}{*}{$24$} \\
					$(1,1,2)$ & $24$ & $68$ & $200$ &\\
					$(1,2,1)$ & $28$ & $98$ & $200$ & \\
					$(1,3)$ & $24$ & $34$ & $112$ & \\
					$(2,1,1)$ &  $28$ & $128$ & $200$ &\\
					$(2,2)$ & $28$ & $68$ & $124$ &\\
					$(3,1)$ & $34$ & $96$ & $112$ &\\
					$(4)$ & $24$ & $24$ & $24$ &
				\end{tabular}
				\caption{Number of extreme points of the set of Gibbs-preserving operations $\mathcal{S}^{\gamma(\beta)}$ for $n = 4$, depending on the underlying degeneracy structure $\mu$. At infinite temperature, $\beta = 0$, the number is constant and equal to $n!$, and in the zero-temperature limit, $\beta \rightarrow \infty$, it is given by Proposition \ref{prop:extreme_pts}. The resulting values in these two asymptotic regimes, computed explicitly using the Wolfram script available at \cite{github}, exhibit substantial variation with the underlying degeneracy structure.}
				\label{tab:num_points}
			\end{table}
			
			\section{Proofs}
			\label{sec:proofs}

			\begin{proof}[Proof of Theorem~\ref{thm:general}]
				Since $
				\operatorname*{Li}\mathcal{S}^{\gamma(\beta)}
				\subseteq
				\operatorname*{Ls}\mathcal{S}^{\gamma(\beta)}
				$,
				to show that
				$$
				\operatorname*{Lim}\mathcal{S}^{\gamma(\beta)}
				=
				\mathcal T^{q^1,\ldots,q^k}
				$$
				it suffices to prove the two inclusions
				\begin{equation}
					\label{eq:2_inclusions}
					\operatorname*{Ls}\mathcal{S}^{\gamma(\beta)}
					\subseteq
					\mathcal T^{q^1,\ldots,q^k},
					\qquad
					\mathcal T^{q^1,\ldots,q^k}
					\subseteq
					\operatorname*{Li}\mathcal{S}^{\gamma(\beta)}.
				\end{equation}
				We begin with the upper limit. Suppose that
				$$
				P\in\operatorname*{Ls}\mathcal{S}^{\gamma(\beta)}.
				$$
				Then, there exists a sequence $\beta_\ell\to\infty$ and matrices
				$P^{\beta_\ell}\in S^{\gamma(\beta_\ell)}$
				such that $P^{\beta_\ell}\to P$.
				Since
				$$
				\gamma(\beta_\ell)P^{\beta_\ell}
				=
				\gamma(\beta_\ell),
				$$
				for every block index $b=1,\ldots,k$,
				\begin{equation}
					\label{eq:block-balance}
					\sum_{a=1}^k
					\gamma^{a}(\beta_\ell)
					\bigl(P^{\beta_\ell}\bigr)^{ab}
					=
					\gamma^{b}(\beta_\ell),
				\end{equation}
				where $\gamma^{a}(\beta_\ell)$ and 
				$\bigl(P^{\beta_\ell}\bigr)^{ab}$ are conformable for row-by-column multiplication.
				We first show that all strictly lower-triangular blocks of $P^{\beta_{\ell}}$ vanish in the limit.
				
				Fix indices
				$$
				1\le b<a\le k.
				$$
				Since all matrix entries are nonnegative, equation~\eqref{eq:block-balance} implies
				$$
				\gamma^{a}(\beta_\ell)
				\bigl(P^{\beta_\ell}\bigr)^{ab}
				\le
				\gamma^{b}(\beta_\ell),
				$$
				componentwise. Recall that for $b<a$, $$\|\gamma^b(\beta_{\ell})\|_1\ll \|\gamma^a(\beta_{\ell})\|_1,$$ as $\ell\to\infty$. Therefore,
				$$
				\bigl(P^{\beta_\ell}\bigr)^{ab}\to0,
				$$
				and passing to the limit gives
				$$
				P^{ab}=0,
				\quad \text{for $b<a$}.
				$$
				Thus $P$ is block upper triangular.
				
				We now consider the diagonal blocks. Again from~\eqref{eq:block-balance},
				taking $b=a$, we obtain
				$$
				\gamma^{a}(\beta_\ell)
				\bigl(P^{\beta_\ell}\bigr)^{aa}
				\le
				\gamma^{a}(\beta_\ell).
				$$
				Therefore every diagonal block $\bigl(P^{\beta_\ell}\bigr)^{aa}$
				is $\gamma^a(\beta_{\ell})$-subpreserving. Passing to the limit shows that each diagonal block $P^{aa}$ is $q^a$-subpreserving. We conclude that $
				P\in\mathcal T^{q^1,\ldots,q^k}$.
				This proves
				$$
				\operatorname*{Ls}\mathcal{S}^{\gamma(\beta)}
				\subseteq
				\mathcal T^{q^1,\ldots,q^k}.
				$$
				
				We now prove the other inclusion.
				Let $P\in\mathcal T^{q^1,\ldots,q^k}$. 
				We construct matrices
				$P^{\beta}\in \mathcal{S}^{\gamma(\beta)}$
				such that
				$P^{\beta}\to P$.
				The idea is the following: we keep all rows of $P$ except the last one, and modify the final row so as to enforce exact Gibbs invariance.
				Let
				$$
				p=(P_{n1},\ldots,P_{nn})
				$$
				be the last row of $P$, and let
				$$
				p^{(\beta)}=(p^{(\beta)}_1,\ldots,p^{(\beta)}_n)
				$$
				be defined by
				\begin{equation}
					\label{eq:pbeta_def}
					p^{(\beta)}=\frac{1}{\gamma_n(\beta)}\left[\gamma(\beta)-\gamma(\beta)P\right]+p.
				\end{equation}
				In coordinates,
				\begin{equation}
					\label{eq:pbeta}
					p^{(\beta)}_j
					=
					\frac{
						\gamma_j(\beta)
						-
						\sum_{i=1}^{n-1}
						\gamma_i(\beta)P_{ij}
					}
					{\gamma_n(\beta)},
					\qquad
					j=1,\ldots,n.
				\end{equation}
				Define $P^{\beta}$ by replacing the last row $p$ of $P$ with $p^{(\beta)}$:
				\begin{equation}
					P^{\beta}_{ij}
					=
					\begin{cases}
						P_{ij}, & i<n,\\
						p^{(\beta)}_j, & i=n.
					\end{cases}
				\end{equation}
				By construction, $P^{\beta}$ satisfies
				$$
				\gamma(\beta)P^{\beta}
				=
				\gamma(\beta),
				$$
				for all $\beta\geq0$. 
				We claim that $P^{\beta}$ has row sums equal to $1$, and nonnegative entries for all sufficiently large $\beta$. The claim is true for the first $n-1$ rows of $P^{\beta}$. Since $P$ is row-stochastic, from~\eqref{eq:pbeta_def} we have that the last row $p^{(\beta)}$ of $P^{\beta}$ sums to $1$. 
				Positivity for sufficiently large $\beta$ follows from the $q^a$-subpreserving property of the diagonal blocks $P^{aa}$.
				Hence, we exhibited a family $P^{(\beta)}$ of  $\gamma(\beta)$-preserving row-stochastic matrices such that  $P^{(\beta)}\to P$.
				Thus
				$$
				P\in
				\operatorname*{Li}\mathcal{S}^{\gamma(\beta)}.
				$$
				Since $P\in\mathcal T^{q^1,\ldots,q^k}$ was arbitrary,
				$$
				\mathcal T^{q^1,\ldots,q^k}
				\subseteq
				\operatorname*{Li}\mathcal{S}^{\gamma(\beta)}. 
				$$
				Combining the two inclusions~\eqref{eq:2_inclusions}
				we conclude that
				\begin{equation}
					\operatorname{Lim}S^{\gamma(\beta)}= \mathcal T^{q^1,\ldots,q^k}.
				\end{equation}

			\end{proof}

			\begin{proof}[Proof of Proposition~\ref{prop:extreme_pts}] The extreme points of $\mathcal{T}_{\mu}$ are $0$--$1$ matrices:
				\begin{equation}
					\operatorname{ext}\left(\mathcal{T}_{\mu}\right)=\left\{P\in\{0,1\}^{n\times n}\colon P\in \mathcal{T}_{\mu}\right\},
				\end{equation}
				as can be shown by contradiction.
				The block-upper-triangular $0$--$1$ matrices $P$  in $\mathcal{T}_{\mu}$ are exactly those that satisfy the following two conditions:
				\begin{enumerate}
					\item each row contains exactly one $1$ (row-stochasticity of $P$);
					\item for every $a=1,\ldots,k$, each column of the diagonal block $P_{aa}$ contains at most one $1$ (column-substochasticity of $P^{aa}$).
				\end{enumerate}
				
				Using the above characterisation, we now  derive the counting formula
				\begin{equation}
					\label{eq:claimed_formula}
					\#\operatorname{ext}\left(\mathcal{T}_{\mu}\right)=\prod_{a=1}^k
					\sum_{i+i'=m_a}
					i!\binom{m_a}{i}^2
					\left(m_{a+1}+\cdots+m_k\right)^{i'}.
				\end{equation}
				The Laguerre representation~\eqref{eq:Laguerre_repr} then follows immediately.
				
				Since the blocks are independent, it suffices to count, for each fixed $a$, the number of admissible matrices of size
				$$
				m_a\times (m_a+\cdots+m_k).
				$$
				
				We distinguish two types of rows.
				
				Suppose that exactly $i$ rows place their $1$ inside the diagonal block $P^{aa}$. The remaining $
				i'=m_a-i$
				rows place their $1$ strictly to the right of the diagonal block.
				There are
				$$
				\binom{m_a}{i}
				$$
				ways to choose the $i$ rows contributing to the diagonal block. Once these rows are selected, their $1$'s must occupy distinct columns among the first $m_a$ columns. Hence the number of possibilities is
				$$
				m_a(m_a-1)\cdots(m_a-i+1)
				=
				i!\binom{m_a}{i}.
				$$
				The remaining $i'$ rows may place their $1$ arbitrarily among the last
				$m_{a+1}+\cdots+m_k$
				columns, giving
				$$
				\left(m_{a+1}+\cdots+m_k\right)^{i'}
				$$
				possibilities.
				
				Therefore the total number of admissible matrices associated with the $a$-th block row is
				$$
				\sum_{i+i'=m_a}
				\binom{m_a}{i}
				\,i!\binom{m_a}{i}\,
				\left(m_{a+1}+\cdots+m_k\right)^{i'}=\sum_{i+i'=m_a}
				i!\binom{m_a}{i}^2
				\left(m_{a+1}+\cdots+m_k\right)^{i'}
				$$
				Multiplying over $a=1,\ldots,k$
				yields~\eqref{eq:claimed_formula}.
			\end{proof}

			\begin{proof}[Proof of Theorem~\ref{prop:monotones}] 
				
				Suppose first that 
				$x=yP$ 
				for some $P\in\mathcal T_\mu$.
				Write $P=(P^{ab})_{a,b=1}^k$ according to the block decomposition induced by
				$\mu=(m_1,\ldots,m_k)$.
				Since $P$ is block upper triangular,
				$$
				x^a
				=
				y^aP^{aa}
				+
				z^a,
				\qquad
				a=1,\ldots,k,
				$$
				where
				$$
				z^a:=\sum_{b<a}y^bP^{ba}.
				$$. 
				Since the diagonal block $P^{aa}$ is column-substochastic, weak majorisation theory implies
				$$
				\sum_{i=1}^r
				\bigl(y^aP^{aa}\bigr)_i^\downarrow
				\le
				\sum_{i=1}^r
				(y^a)_i^\downarrow,\quad\text{for all $r=1,\ldots,m_a$.}
				$$
				Moreover, since $z^a$ has nonnegative entries,
				$$
				\sum_{i=1}^r
				(x^a)_i^\downarrow
				=
				\sum_{i=1}^r
				\bigl(y^aP^{aa}+z^a\bigr)_i^\downarrow
				\le
				\sum_{i=1}^r
				\bigl(y^aP^{aa}\bigr)_i^\downarrow
				+
				\|z^a\|_1,
				$$
				because the sum of the $r$ largest entries of a nonnegative vector increases
				by at most the total mass that is added.
				Combining the two inequalities yields
				\begin{equation}
					\label{eq:block}
					\sum_{i=1}^r
					(x^a)_i^\downarrow
					\le
					\sum_{i=1}^r
					(y^a)_i^\downarrow
					+
					\|z^a\|_1,
					\qquad
					r=1,\ldots,m_a.
				\end{equation}
				
				Now let
				$$
				\ell
				=
				m_1+\cdots+m_{a-1}+r,
				\qquad
				1\le r\le m_a.
				$$
				Then
				$$
				\sum_{i=1}^{\ell}x_i^{\mu,\downarrow}
				=
				\sum_{c<a}\|x^c\|_1
				+
				\sum_{i=1}^r(x^a)_i^\downarrow.
				$$
				Applying~\eqref{eq:block},
				\begin{equation}
					\label{eq:intermediate_transf}
					\sum_{i=1}^{\ell}x_i^{\mu,\downarrow}
					\le
					\sum_{c<a}\|x^c\|_1
					+
					\sum_{i=1}^r(y^a)_i^\downarrow
					+
					\|z^a\|_1. 
				\end{equation}
				It therefore remains to compare the total masses of the preceding blocks.
				For every block $c$,
				$$
				\|x^c\|_1
				=
				\|y^cP^{cc}\|_1+\|z^c\|_1.
				$$
				Since $P$ is row-stochastic,
				$$
				\|y^cP^{cc}\|_1
				=
				\|y^c\|_1
				-
				\sum_{d>c}\|y^cP^{cd}\|_1.
				$$
				Hence
				$$
				\|x^c\|_1
				=
				\|y^c\|_1
				-
				\sum_{d>c}\|y^cP^{cd}\|_1
				+
				\|z^c\|_1.
				$$
				Summing over $c<a$ gives
				\begin{equation}
					\label{eq:mass_transf}
					\sum_{c<a}\|x^c\|_1
					=
					\sum_{c<a}\|y^c\|_1
					-
					\sum_{c<a}\sum_{d>c}\|y^cP^{cd}\|_1
					+
					\sum_{c<a}\|z^c\|_1. 
				\end{equation}
				Observe that
				$$
				\|z^c\|_1
				=
				\sum_{b<c}\|y^bP^{bc}\|_1,
				$$
				since all matrices and vectors are nonnegative.
				Hence
				$$
				\sum_{c<a}\|z^c\|_1
				=
				\sum_{b<c<a}\|y^bP^{bc}\|_1.
				$$
				On the other hand,
				$$
				\sum_{c<a}\sum_{d>c}\|y^cP^{cd}\|
				=
				\sum_{c<d<a}\|y^cP^{cd}\|_1
				+
				\|z^a\|_1
				+
				\sum_{\substack{c<a\\ d>a}}\|y^cP^{cd}\|_1.
				$$
				The first term on the right-hand side is precisely
				$\sum_{c<a}\|z^c\|_1$, and it accounts for all transport
				between the first $a-1$ blocks.

				Substituting this identity into~\eqref{eq:mass_transf} yields
				$$
				\sum_{c<a}\|x^c\|_1
				+\|z^a\|_1
				=
				\sum_{c<a}\|y^c\|_1
				-
				\sum_{\substack{c<a\\ d>a}}
				\|y^cP^{cd}\|_1.
				$$
				Since every summand on the right-hand side is nonnegative,
				$$
				\sum_{c<a}\|x^c\|_1
				+\|z^a\|_1
				\le
				\sum_{c<a}\|y^c\|_1.
				$$
				Therefore, the inequality~\eqref{eq:intermediate_transf} becomes
				$$
				\sum_{i=1}^{\ell}x_i^{\mu,\downarrow}
				\le
				\sum_{c<a}\|y^c\|_1
				+
				\sum_{i=1}^r(y^a)_i^\downarrow
				=
				\sum_{i=1}^{\ell}y_i^{\mu,\downarrow}.
				$$
				
				Since $\ell$ was arbitrary, we conclude that
				$$
				\sum_{i=1}^{\ell}x_i^{\mu,\downarrow}
				\le
				\sum_{i=1}^{\ell}y_i^{\mu,\downarrow},
				\qquad
				\ell=1,\ldots,n.
				$$
				
				Finally, $P$ is row-stochastic, and therefore
				$$
				\sum_{i=1}^n x_i
				=
				\sum_{i=1}^n y_i.
				$$
				Thus equality holds for $\ell=n$, completing the proof.

				Conversely, assume that
				\begin{equation}\label{eq:InequalitiesProof}
					\sum_{i=1}^\ell x_i^{\mu,\downarrow}
					\le
					\sum_{i=1}^\ell y_i^{\mu,\downarrow},
					\qquad
					\ell=1,\ldots,n,
				\end{equation}
				with equality at $\ell=n$. 
				We construct $P\in\mathcal T_\mu$ such that $x=yP$.
				
				Proceed block by block. Recalling the notation \eqref{eq:NotationMuOrdering}, the inequalities \eqref{eq:InequalitiesProof} can be written as 
				\begin{equation}\label{eq:InequalitiesFirstBlock}
					\sum_{j=1}^{\ell}(x^{1})^{\downarrow}_j\leq \sum_{j=1}^{\ell}(y^1)^{\downarrow}_j, \qquad \ell=1,\dots,m_1
				\end{equation}
				for $a=1$ and 
				\begin{equation}\label{eq:generalInequality}
					\|x^{1}\|_{1}+\dots +\|x^{a-1}\|_{1}+\sum_{j=1}^{\ell} (x^a)^{\downarrow}_{j} \leq  \|y^{1}\|_{1}+\dots +\|y^{a-1}\|_{1}+\sum_{j=1}^{\ell} (y^a)^{\downarrow}_{j} , \qquad \ell=1,\dots,m_a
				\end{equation}
				for $1<a\leq k$. The inequalities \eqref{eq:InequalitiesFirstBlock} imply the existence of a column-substochastic matrix $P^{11}$ such that $x^1=y^1 P^{11}$ by Hardy--Littlewood--Pólya theorem~\cite{marshall1979inequalities}. For $a>1$, the diagonal blocks $P^{aa}$ can be again obtained analogously, with the additional caution of using the strictly upper-block entries of $P$ to transport the deficit of mass $\sum_{b<a}(\|y^{b}\|_1-\|x^{b}\|_1)$ from the preceding blocks, which is non-negative thanks to the cumulative inequalities. More explicitly, the inequalities \eqref{eq:generalInequality} can be rewritten as 
				\begin{equation}
					\sum_{j=1}^{\ell} (x^a)^{\downarrow}_{j} \leq  \sum_{j=1}^{\ell} (y^a)^{\downarrow}_{j}+\sum_{b<a}(\|y^{b}\|_1-\|x^{b}\|_1) , \qquad \ell=1,\dots,m_a,
				\end{equation}
				then Hardy--Littlewood--Pólya theorem guarantees the existence of a column-substochastic matrix $P^{aa}$ such that
				\begin{equation}\label{eq:additionalTerm}
					x^{a}=y^a P^{aa}+\sum_{b<a}(\|y^{b}\|_1-\|x^{b}\|_1)v^aP^{aa},
				\end{equation}
				with $v^a=(1,0,\dots,0)$. Then, using the fact that for $b<a$ we have found $x^b=\sum_{c<b}y^{c}P^{cb}$, the last term of \eqref{eq:additionalTerm} contains only linear combinations of the components of $y^{1},\dots,y^{a-1}$, which allows the extraction of the blocks $P^{1a},\dots,P^{a-1,a}$. The resulting matrix is block upper triangular, and has column-substochastic diagonal blocks by construction, and one can check that it is also row-stochastic and has non-negative elements. 
			\end{proof}
			
			\begin{proof}[Proof of Proposition~\ref{prop:fate_monotones}]
				Assume that there exists $\beta_0$, such that $
				x\prec^{\gamma(\beta)} y$ for all $\beta>\beta_0$. Then, if $\beta>\beta_0$,
				\begin{equation}
					F_{\beta}(x)=U(x)-\frac1\beta S(x)
					\le
					U(y)-\frac1\beta S(y)=F_{\beta}(y).
				\end{equation}
				Hence, either $
				U(x)<U(y)$,
				or
				$U(x)=U(y) $ but 
				$S(x)\ge S(y)$.
				
				Consider now the inequalities
				$$
				P_\alpha(x)\le P_\alpha(y)
				$$
				for all $\beta>\beta_0$. 
				
				The inequalities read explicitly,
				\begin{equation}
					\sum_j x_j^\alpha e^{-\beta(1-\alpha)E_j}\leq  \sum_j y_j^\alpha e^{-\beta(1-\alpha)E_j}
				\end{equation}
				If $1-\alpha>0$, the dominant contribution to $P_{\alpha}$ comes from the lowest energy sectors, yielding the reverse lexicographic condition~\eqref{eq:revlex-condition}. 
				If $1-\alpha<0$, the highest energy sectors dominate, and we have the lexicographic condition~\eqref{eq:lex-condition}.
				
			\end{proof}
			
			\bibliographystyle{alpham}
			\bibliography{References}
			
		\end{document}